\DeclareMathAlphabet\mathbfcal{OMS}{cmsy}{b}{n}
\newcommand{\idle}{\mathrm{i}}
\newcommand{\retx}{\mathrm{x}}
\newcommand{\new}{\mathrm{n}}
\DeclareMathOperator*{\argmin}{arg\,min} 
\newtheorem{theorem}{Theorem}
\newtheorem{problem}{Problem}
\newtheorem{lemma}{Lemma}
\newtheorem{prop}{Proposition}
\newcommand{\Exp}[1]{\mathbb{E}\left[ #1 \right]} 
\newcommand{\Pt}{\mathrm{P}}
\title{Average Age of Information with Hybrid ARQ under a Resource Constraint}
\author{
\IEEEauthorblockN{Elif Tu\u{g}\c{c}e Ceran, Deniz G{\"u}nd{\"u}z, and Andr\'as Gy\"orgy}\\
\IEEEauthorblockA{Department of Electrical and Electronic Engineering \\Imperial College London\\
Email: \{e.ceran14, d.gunduz, a.gyorgy\}@imperial.ac.uk
}}
\begin{document}
\maketitle

\begin{abstract}
Scheduling  the transmission of status updates over an error-prone communication channel is studied in order to minimize the long-term average \textit{age of information} at the destination under a constraint on the average number of transmissions at the source node. After each transmission, the source receives an instantaneous ACK/NACK feedback, and decides on the next update without prior knowledge on the success of future transmissions. The optimal scheduling policy is first studied under different feedback mechanisms when the channel statistics are known; in particular, the standard automatic repeat request (ARQ) and hybrid ARQ (HARQ) protocols are considered. Structural results are derived for the optimal policy under HARQ, while the optimal policy is determined analytically for ARQ. For the case of unknown environments, an average-cost reinforcement learning algorithm is proposed that learns the system parameters and the transmission policy in real time. The effectiveness of the proposed methods is verified through numerical results. \makeatletter{\renewcommand*{\@makefnmark}{}
 \footnotetext{Part of this work was presented at the  IEEE Wireless Communications and Networking Conference, Barcelona, Spain, April 2018 \cite{wcnc_paper}.}\makeatother}
\end{abstract}

\begin{IEEEkeywords} 
Age of information, hybrid automatic repeat request (HARQ), constrained Markov decision process, reinforcement learning
\end{IEEEkeywords}

\section{Introduction} 

Motivated by the growing interest in timely delivery of  information in status update systems, the \emph{age of information (AoI)} has been introduced as a performance measure to quantify data staleness at the receiver \cite{Altman2010, Kaul2011, Kaul2012}. Consider a source node that samples an underlying time-varying process and sends the sampled status of the process over an imperfect communication channel that introduces delays. The AoI characterizes the data staleness (or tardiness) at the destination node, and it is defined as the time that has elapsed since the most recent status update available at the destination was generated. Different from classical performance measures, such as the delay or throughput, AoI jointly captures the latency in transmitting updates and the rate at which they are delivered.

Our goal in this paper is to minimize the average AoI at the destination taking into account \emph{retransmissions} due to errors over the noisy communication channel. Retransmissions
are essential for providing reliability of status updates over error-prone channels, particularly in wireless settings.  Here, we analyze the AoI for both the standard ARQ and hybrid ARQ (HARQ) protocols.

In the HARQ protocol, the receiver combines information from all previous transmission attempts of the same packet in order to increase the success probability of decoding \cite{hybrid2001}, \cite{harq2003}, \cite{Lagrange2010}.  The exact relationship between the probability of error and the number of retransmission attempts varies depending on the channel conditions and the particular HARQ method employed \cite{hybrid2001}, \cite{harq2003}, \cite{Lagrange2010}. In general, the probability of successful decoding increases with each transmission, but the AoI of the received packet also increases. Therefore, there is an inherent trade-off between retransmitting  previously failed status information with a lower error probability, or sending a fresh status update with higher error probability. We address this trade-off between the success probability and the freshness of the status update to be transmitted, and develop scheduling policies to minimize the expected average AoI. 

In the standard ARQ protocol, if a packet cannot be decoded, it is retransmitted until successful reception. Note, however, that, when optimizing for the AoI, there is no point in retransmitting the same packet, since a newer packet with more up-to-date information is available at the sender at the time of retransmission. Thus, after the reception of a NACK feedback, the actual packet is discarded, and the most recent status of the underlying process is transmitted (the exact timing of the transmission may depend on the feedback, i.e., on the success history of previous transmissions). A scheduling  policy to decide whether to stay idle or transmit a status update  should be designed considering a resource constraint on the average number of transmissions.



We develop scheduling policies for both the HARQ and the standard ARQ protocols to minimize the expected average AoI under a constraint on the average number of transmissions, which is motivated by the fact that sensors sending status updates have usually limited energy supplies (e.g., are powered via energy harvesting \cite{Gunduz2014}); and hence, they cannot afford to send an unlimited number of updates, or increase the signal-to-noise-ratio in the transmission. First, we assume that the success probability before each transmission attempt is known (which, in the case of HARQ, depends on the number of previous unsuccessful transmission attempts); and therefore, the source node can judiciously decide when to retransmit and when to discard a failed packet and send a fresh update. Then, we consider transmitting status updates over an unknown channel, in which case the success probabilities of transmission attempts are not known \emph{a priori}, and must be learned in an online fashion. This latter scenario can model sensors embedded in unknown or time-varying environments. We employ reinforcement learning (RL) algorithms to balance exploitation and exploration in an unknown environment, so that the source node can quickly learn the environment based on the ACK/NACK feedback signals, and can adapt its scheduling policy accordingly, exploiting its limited resources in an efficient manner.


The main contributions of this paper are as follows:

\begin{itemize}
\item Average AoI is studied under a long-term average resource constraint imposed on the transmitter, which limits the average number of transmissions. 
\item Both retransmissions and pre-emption following a failed transmission are considered, corresponding, respectively, to the HARQ and ARQ protocols, and the structure of the optimal policy is determined in general.
\item The optimal preemptive transmission policy for the standard ARQ protocol is shown to be a threshold-type randomized policy, and is derived in closed-form.
\item An average-cost RL algorithm; in particular, \textit{average-cost SARSA with softmax},  is proposed to learn the optimal scheduling decisions when the transmission success probabilities are unknown. 
\item Extensive numerical simulations are conducted in order to show the effect of feedback, resource constraint and ARQ or HARQ mechanisms on the freshness of the data. 
\end{itemize}

\subsection{Related Work}
\label{sec:related}

Most of the earlier work on AoI consider queue-based models, in which the status updates arrive at the source node randomly following a memoryless Poisson process, and are stored in a buffer before being transmitted to the destination over a noiseless channel \cite{Kaul2011, Kaul2012}. Instead, in the so-called \emph{generate-at-will} model, \cite{Altman2010,Tan2015,Kadota2016,hsuage2017,Sun2017_tran}, also adopted in this paper,  the status  of the underlying process can be sampled at any time by the source node. 


A constant packet failure probability for a status update system is investigated for the first time in \cite{Chen2016}, where status updates arrive according to a Poisson process, while the transmission time for each packet is exponentially distributed. Fast-come-first-served (FCFS) scheduling is analyzed and it is shown that packet loss and large queuing delay due to old packets in the queue result in an increase in the AoI. Different scheduling decisions at the source node are investigated; including the last-come-first-served (LCFS)  principle, which always transmits the most up-to-date packet, and retransmissions with preemptive priority, which preempts the current packet in service when a new packet arrives. 

Broadcasting of status updates to multiple receivers over an unreliable broadcast channel is considered in \cite{Kadota2016}. A low complexity sub-optimal scheduling policy is proposed when the AoI at each receiver and the transmission error probabilities to all the receivers are known. However,  only  work-conserving policies are considered in \cite{Kadota2016}, which update the information at every time slot, since no constraint is imposed on the number of updates. Optimizing the scheduling decisions with multiple receivers over a perfect channel is investigated in \cite{hsuage2017}, and it is shown that there is an optimal scheduling algorithm that is of threshold-type. To our knowledge, the latter is the only prior work in the literature which applies RL in the AoI framework. However, their goal is to learn the data arrival statistics, and it does not consider either an unreliable communication link or HARQ. Moreover, we employ an average-cost RL method, which has significant advantages over discounted-cost methods, such as \emph{Q-learning} \cite{Mahadevan1996}.

The AoI in the presence of HARQ has been considered in  \cite{Parag2017, Najm2017} and  \cite{Yates2017}. In \cite{Parag2017} the affect of design decisions, such as the length of the transmitted codewords, on the average AoI is analyzed. The status update system is modeled as an M/G/1/1 queue in \cite{Najm2017}; however, no resource constraint is considered, and the status update arrivals are assumed to be memoryless and random, in contrast to our work, which considers the \emph{generate-at-will} model. Moreover, a specific coding scheme is assumed in \cite{Najm2017}, namely MDS (maximum distance separable) coding, which results in a particular formula for the successful decoding probabilities, whereas we allow general functions for these probabilities. From a queuing-systems perspective, our model can be considered as a G/G/1/1 queue with optimization of packet arrivals and pre-emption. In \cite{Yates2017}, HARQ is considered in a zero-wait system, where as soon as an update is successfully transmitted to the destination, the source starts transmitting a new status update, as no resource constraint or pre-emption is taken into account.

In \cite{Altman2010} and \cite{fenni2012}, the receiver can choose to update its status information by downloading an update over one of the two available channels, a free yet unreliable channel, modeling a Wi-Fi connection, and a reliable channel with a cost, modeling a cellular connection. 
Although the Lagrangian formulation of our constrained optimization problem for the standard ARQ protocol is similar to the one considered in \cite{Altman2010}, our problem is more complicated due to several reasons: they have not considered the effect of retransmissions or any algorithm that learns the unknown system parameters, and even without these complications, we need to determine the Lagrange multiplier corresponding to the given constraints, while it is given in \cite{Altman2010}.
 
To the best of our knowledge, this is the first work in the literature that addresses a status update system with HARQ in the presence of resource constraints. In addition, no previous work has studied the average AoI over a channel with unknown error probabilities, and employed an average-cost RL algorithm.  


\section{System Model and Problem Formulation}
\label{sec:system}

\begin{figure}
\centering
\includegraphics[scale=0.5]{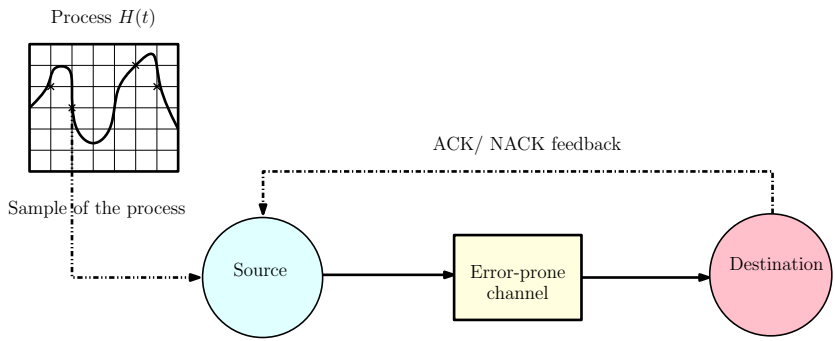}
\caption{System model of a status update system over an error-prone point-to-point link in the presence of ACK/NACK feedback from the destination.}
\label{fig:system}
\end{figure}

We consider a time-slotted status update system over an error-prone 
communication link (see Figure~\ref{fig:system}). The source monitors an underlying time-varying process, and can generate a status update at any time slot; known as the \emph{generate-at-will} model \cite{Sun2017_tran}. The status updates are communicated from the source node to the destination over a time-varying channel. Each transmission attempt of a status update takes constant time, which is assumed to be equal to the duration of one time slot. 	Throughout the paper, we will normalize all time durations by the duration of one time slot.

We assume that the 
channel changes randomly from one time slot to the next in an independent and identically distributed fashion, and the channel state information is available only at the destination node. We further assume the availability of an error- and delay-free single-bit feedback from the destination to the source node for each transmission attempt. Successful receipt of a status update is acknowledged by an ACK signal, while a NACK signal is sent in case of a failure. In the classical ARQ protocol, a packet is retransmitted after each NACK feedback, until it is successfully decoded (or a maximum number of allowed retransmissions is reached), and the received signal is discarded after each failed transmission attempt. Therefore, the probability of error is the same for all retransmissions. However, in the AoI framework there is no point in retransmitting a failed out-of-date status packet if it has the same error probability as that of a fresh update. Hence, we assume that if the ARQ protocol is adopted,  the source always removes  failed packets and transmits a fresh status update.
On the other hand, if the HARQ protocol is used, the received signals from all previous transmission attempts for the same packet are combined for decoding. Therefore, the probability of error decreases with every retransmission. In general, the error probability of each retransmission attempt depends on the particular combination technique used by the decoder, as well as on the channel conditions \cite{hybrid2001}. 

\begin{figure}
\centering
\includegraphics[scale=0.55]{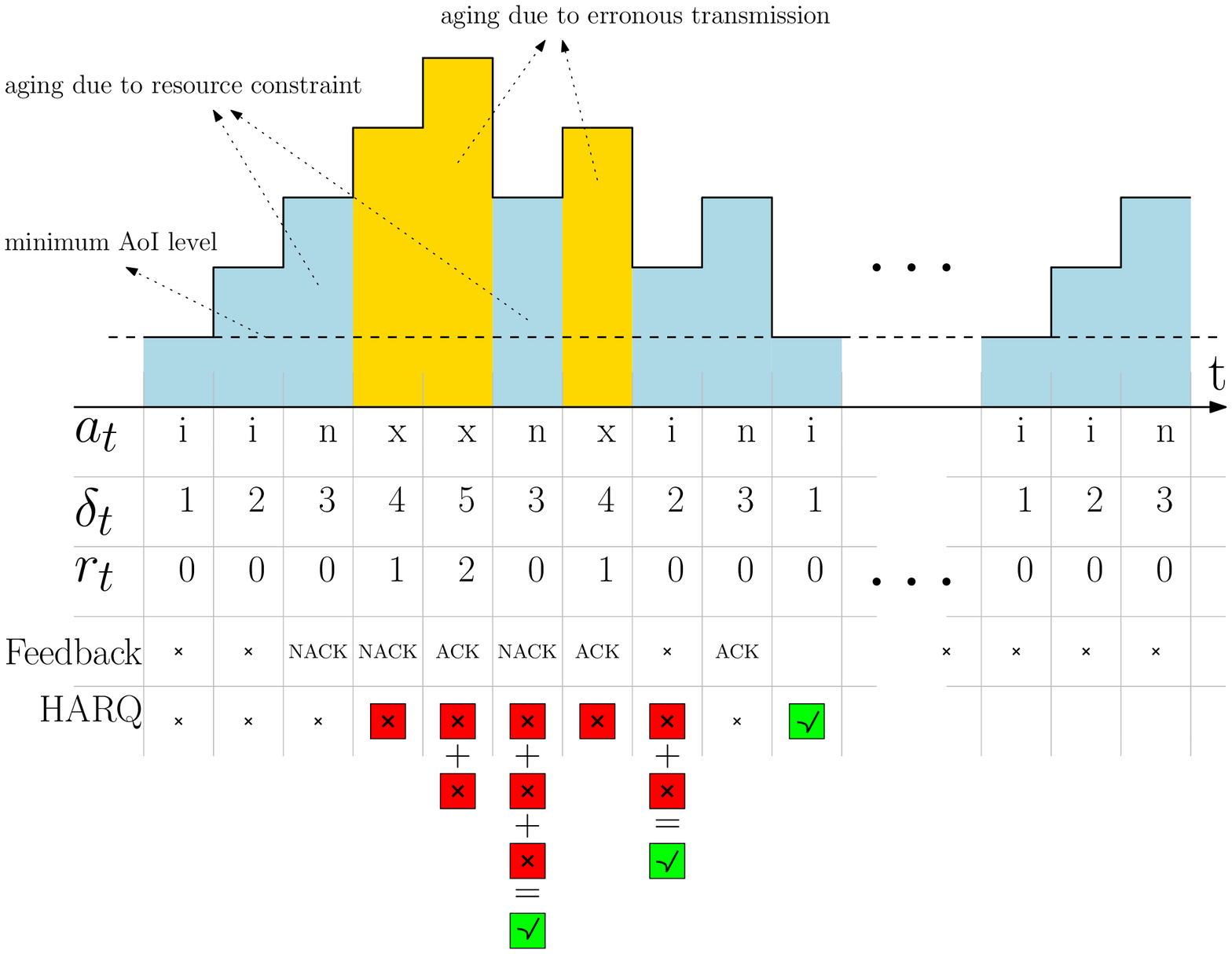}
\caption{Illustration of the AoI in a slotted status update system with HARQ. ($\delta_t,r_t$) represents the state of the system and the action is chosen based on the state ($\delta_t,r_t$) and denoted by $a_t$. Packets with decoding errors (represented by red squares) are stored in the receiver and combined to decode the information successfully (represented by green squares). }
\label{fig:slotted}
\end{figure}


AoI measures the timeliness of the information at the receiver. It is defined as the number of time slots elapsed since the generation of the most up-to-date packet successfully decoded at the receiver. Formally, denoting the latter generation time  for any time slot $t$ by $U(t)$, the AoI, denoted by $\delta_t$, is defined as
\begin{align}
\delta_t\triangleq t-U(t).
\end{align}
We assume that a transmission decision is made at the beginning of each slot. The AoI increases by one when the transmission fails, while it decreases to one in the case of ARQ, or to the number of retransmissions plus one in the case of HARQ, when a status update is successfully decoded
(the minimum age is set to $1$ to reflect that the transmission is one slot long).

The probability of error after $r$ retransmissions, denoted by $g(r)$, depends on $r$ and the particular HARQ scheme used for combining multiple transmission attempts (an empirical method to estimate $g(r)$ is presented in \cite{harq2003}). As in any reasonable HARQ strategy, we assume that $g(r)$ is non-increasing in the number of retransmissions $r$; that is, $g(r_1) \geq g(r_2)$ for all $r_1 \leq r_2$. For simplicity, we assume that $0<g(0)<1$, that is, the channel is noisy and there is a possibility that the first transmission is successful (if $g(0)=0$, the problem becomes trivial, while $g(0)<1$ can be easily relaxed to the condition that there exists an $r$ such that $g(r)<1$). Also, we will denote the maximum number of retransmissions by $r_{max}$, which may take the value $\infty$, unless otherwise stated. However, if $g(r)=0$ for some $r$ (i.e., a packet is always correctly decoded after $r$ retransmissions), we set $r_{max}$ to be the smallest such $r$. Finally note that standard HARQ methods only allow a finite maximum number of retransmissions \cite{IEEEstandard}.

For any time slot $t$, let $\delta_t\in \mathds{Z}^+$ denote the AoI at the beginning of the time slot and $r_t \in \{0,\ldots,r_{max}\}$ denote the number of previous transmission attempts of the same packet. Then the state of the system can be described by  $s_t \triangleq (\delta_t,r_t)$. At each time slot, the source node takes one of the three actions, denoted by $a \in \mathcal{A}$, where $\mathcal{A}=\{\idle,\new,\retx\}$:  (i) remain idle ($a=\idle$); (ii) transmit a new status update ($a=\new$); or (iii) retransmit the previously failed update ($a=\retx$).  The  evolution of AoI for a slotted status update system is illustrated in Figure~\ref{fig:slotted}.

Note that if  no resource constraint is imposed on the source, remaining idle is clearly suboptimal since it does not contribute to decreasing the AoI. However, continuous transmission is typically not possible in practice due to energy or interference constraints. Accordingly, we impose a constraint on the average number of transmissions, and we require that the long-term average number of transmission do not exceed $C_{max} \in (0,1]$ (note that $C_{max}=1$ corresponds to the case in which transmission is allowed in every slot).

This leads to the \emph{constrained Markov decision process} (CMDP) formulation, defined by the 5-tuple $\big(\mathcal{S}, \mathcal{A}, \Pt, c, d\big)$ \cite{Altman}: The countable set of states  $(\delta ,r) \in \mathcal{S} $ and the finite action set $\mathcal{A}=\{\idle,\new,\retx\}$ have already been defined. $\Pt$ refers to the transition function, where  $\Pt(s'|s,a) = \Pr(s_{t+1}=s' \mid s_t = s, a_t=a)$ is the probability that action ${\displaystyle a}$  in state ${\displaystyle s}$  at time ${\displaystyle t}$  will lead to state ${\displaystyle s'}$ at time ${\displaystyle t+1}$, which will be explicitly defined in (\ref{eq:transitions}). The cost function $c: \mathcal{S} \times \mathcal{A} \rightarrow \mathbbm{R}$, is the AoI at the destination, and is defined as $c((\delta,r),a)=\delta$ for any $(\delta,r)\in \mathcal{S}$, $a \in \mathcal{A}$, independently of action $a$. The transmission cost $d:\mathcal{S} \times \mathcal{A} \rightarrow \mathbbm{R}$ is independent of the state and depends only on the action $a$, where $d = 0$ if $a=\idle$, and $d=1$ otherwise. Since our goal is to minimize the AoI subject to a constraint on the average transmission cost, the corresponding problem is a CMDP. 

A policy is a sequence of decision rules $\pi_t: (\mathcal{S} \times \mathcal{A})^t \to [0,1]$, which maps the past states and actions and the current state to a distribution over the actions; that is, after the state-action sequence $s_1,a_1,\ldots,s_{t-1},a_{t-1}$,  action $a$ is selected in state $s_t$ with probability $\pi_t(a_t|s_{1},a_{1},\ldots,$ $s_{t-1},a_{t-1},s_t)$. We  use $s_t^{\pi}=(\delta_t^{\pi},r_t^{\pi})$ and $a_t^{\pi}$ to denote the sequences of states and actions, respectively, induced by policy $\pi=\{\pi_t\}$. 
A policy $\pi=\{\pi_t\}$ is called \emph{stationary} if the distribution of the next action is independent of the past states and actions given the current state, and time invariant; that is, with a slight abuse of notation, $\pi_t(a_t|s_{1},a_{1},\ldots,s_{t-1},a_{t-1},s_t)=\pi(a_t|s_t)$ for all $t$ and $(s_i,a_i) \in \mathcal{S} \times \mathcal{A}$, $i=1, \ldots, t$. Finally, a policy is said to be deterministic  if it chooses an action with probability one; with a slight abuse of notation, we use $\pi(s)$ to denote the action taken with probability one in state $s$ by a stationary deterministic policy.

Let $J^{\pi}(s_0)$ and $C^{\pi}(s_0)$ denote the infinite horizon average age and the average number of transmissions, respectively, when policy $\pi$ is employed with initial state $s_0$. Then the CMDP optimization problem can be stated as follows:
\begin{problem}
\begin{align}
& \mathrm{ Minimize }  ~J^{\pi}(s_0) \triangleq \limsup_{T\rightarrow \infty }\frac{1}{T}\ \Exp{\sum_{t=1}^T{\delta^{\pi}_t}\Big|s_0}, \label{eq:cost}\\
& \mathrm{subject}~\mathrm{to }  ~C^{\pi}(s_0) \triangleq \limsup_{T\rightarrow \infty }\frac{1}{T}\ \Exp{\sum_{t=1}^T{\mathbbm{1}[a^{\pi}_t \neq \idle]  }\Big|s_0}\leq C_{max}. \label{eq:constraint}
\end{align} 
\label{problem}
\end{problem} 
A policy $\pi$ that is a solution of the above minimization problem is called optimal, and we are interested in finding optimal policies. 
Without loss of generality, we assume that the sender and the receiver are synchronized at the beginning of the problem, that is, $s_0=(1,0)$; and we omit $s_0$ from the notation for simplicity.

Before formally defining the transition function $\Pt$ in our AoI problem, we present a simple observation that simplifies $\Pt$: Retransmitting a packet immediately after a failed attempt is better than retransmitting it after waiting for some slots. This is true since  waiting increases the age, without increasing the success probability. The difference in the waiting time is illustrated in Figure~\ref{fig:proof} for a simple scenario, where  the first transmission of a status update results in a failure, while the retransmission is successful. 

\begin{prop}
\label{p1}
For any policy $\pi$ there exists another policy $\pi'$ (not necessarily distinct from $\pi$) such that $J^{\pi'}(s_0) \le J^{\pi}(s_0)$, $C^{\pi'}(s_0) \le C^{\pi}(s_0)$, and $\pi'$ takes a retransmission action only following a failed transmission, that is, the probability  $Pr(a^{\pi'}_{t+1}=\retx|a^{\pi'}_t=\idle)= 0$.
\end{prop}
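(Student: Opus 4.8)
The plan is to prove the statement by a sample-path coupling argument combined with a local interchange lemma. First I would place both policies on a common probability space by coupling the channel realizations \emph{per transmission attempt} rather than per slot: I declare that the $k$-th transmission carried out by $\pi'$ (in temporal order) has exactly the same ACK/NACK outcome as the $k$-th transmission carried out by $\pi$. This is a legitimate coupling because idling never changes the number of previous attempts $r$ of a held packet, so reordering transmissions relative to idle slots leaves the success probability $1-g(r)$ of each individual transmission unchanged; hence the $k$-th transmission has the same marginal outcome under both policies and can be identified.

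Next I would construct $\pi'$ explicitly. Partition the time axis into \emph{segments} delimited by the fresh transmissions ($a=\new$) of $\pi$; within each segment $\pi$ transmits a single packet and interleaves retransmissions ($a=\retx$) with idle slots ($a=\idle$). Define $\pi'$ to perform, within each segment, the very same (re)transmissions in the same order but \emph{consecutively}, immediately after the initial fresh transmission, and to defer all of that segment's idle slots to the end of the segment. Equivalently, $\pi'$ arises from $\pi$ by repeatedly swapping any adjacent pair $(\idle,\retx)$ into $(\retx,\idle)$. Because $\pi'$ performs each transmission no later than $\pi$ does, it can be realized as a legitimate causal policy: at each slot it internally simulates $\pi$ from the feedback it has already received, executes $\pi$'s next transmission while only postponing the idles, and pays the postponed idles back once the current packet is resolved. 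By construction every retransmission of $\pi'$ is immediately preceded by a failed transmission, so $\Pr(a^{\pi'}_{t+1}=\retx\mid a^{\pi'}_t=\idle)=0$, as required.

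The analytic core is a local exchange lemma showing that a single swap $(\idle \text{ at } t,\ \retx \text{ at } t{+}1)\to(\retx \text{ at } t,\ \idle \text{ at } t{+}1)$ (i) leaves the state identical from slot $t{+}2$ on, (ii) preserves the number of transmissions, and (iii) never increases the AoI in any slot. For (iii) I would write $\delta_t=t-U(t)$ and compare the two trajectories: if the retransmission fails, the age agrees in every slot; if it succeeds, the decode — and hence the reset of $U(t)$ to the fresher generation time — happens one slot earlier under $\pi'$, so the age is strictly smaller at slot $t{+}1$ and unchanged elsewhere. The global consequence of the construction is clean: the two policies sample at the same generation times, decode exactly the same packets (same coupled outcomes), and $\pi'$ decodes each packet no later than $\pi$; therefore $U^{\pi'}(t)\ge U^{\pi}(t)$ and thus $\delta^{\pi'}_t\le\delta^{\pi}_t$ for \emph{every} $t$ on every coupled sample path, while each segment contains the same number of transmissions.

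Finally I would transfer these pathwise facts to the objectives. The pointwise age bound gives $\sum_{t=1}^{T}\delta^{\pi'}_t\le\sum_{t=1}^{T}\delta^{\pi}_t$ for every $T$, while the per-segment equality of transmission counts makes the difference $\sum_{t=1}^{T}\mathbbm{1}[a^{\pi'}_t\neq\idle]-\sum_{t=1}^{T}\mathbbm{1}[a^{\pi}_t\neq\idle]$ vanish at segment boundaries and otherwise be bounded by the number of transmissions in the current segment; dividing by $T$, taking $\limsup$, and then expectations yields $J^{\pi'}\le J^{\pi}$ and $C^{\pi'}\le C^{\pi}$. I expect the main obstacle to be rigor at the seams rather than the age accounting: justifying that the deferred-idle construction is genuinely causal, and controlling the boundary term above when passing the sample-path relations to the $\limsup$ time-averages — immediate when $r_{max}<\infty$, since each segment then carries at most $r_{max}+1$ transmissions, but needing a separate limiting argument when $r_{max}=\infty$.
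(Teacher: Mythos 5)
Your proposal is correct in its core and takes essentially the same route as the paper: the paper justifies Proposition~1 only informally, by the remark that waiting before a retransmission increases the age without increasing the success probability (illustrated in its Figure~3), and your per-attempt coupling, segment compaction, and local $(\idle,\retx)\to(\retx,\idle)$ exchange lemma are a faithful rigorization of exactly that interchange idea; your causality argument (simulate $\pi$ through its idle streaks, which is legitimate because no feedback arrives while idling) is also sound.

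The one genuinely incomplete step is the one you flag yourself: the bound $C^{\pi'}\le C^{\pi}$ when $r_{max}=\infty$. Be careful here, because the most tempting patch fails. It is true that the number of transmissions $X_i$ in segment $i$ is stochastically dominated by a geometric random variable with success probability $1-g(0)$ (every attempt succeeds with probability at least $1-g(0)$, and retransmissions stop at the first success), so $\Exp{X_i}\le 1/(1-g(0))$; but the segment containing a \emph{fixed} time $T$ is length-biased, and a policy may correlate long segments with large transmission counts, so one cannot conclude that $\Exp{N'_T-N_T}$ is bounded by this constant, where $N_T=\sum_{t=1}^T\mathbbm{1}[a^{\pi}_t\neq\idle]$ and $N'_T=\sum_{t=1}^T\mathbbm{1}[a^{\pi'}_t\neq\idle]$. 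The correct fix is only slightly more work. Since $N'$ and $N$ agree at every segment boundary, the difference $N'_T-N_T$ is at most $X_{i(T)}$, the transmission count of the segment containing $T$; and since every segment lasts at least one slot, $i(T)\le T$. Hence, by a union bound that uses only the conditional geometric domination of each $X_i$,
\[
\Exp{N'_T-N_T}\;\le\;\Exp{\max_{1\le i\le T}X_i}\;\le\;\sum_{m\ge 1}\min\bigl(1,\,T\,g(0)^{m-1}\bigr)\;\le\;\log_{1/g(0)}T+1+\frac{1}{1-g(0)}\;=\;O(\log T),
\]
which vanishes after dividing by $T$, giving $C^{\pi'}\le C^{\pi}$ with no case distinction on $r_{max}$. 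Finally, a cosmetic point: the paper defines $J$ and $C$ as $\limsup_T\frac{1}{T}\Exp{\cdot}$, so you should take expectations of your pathwise bounds first and the $\limsup$ last, rather than the order you state; since your bounds hold for every $T$, this changes nothing of substance.
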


\begin{figure}
\centering
\includegraphics[scale=0.42]{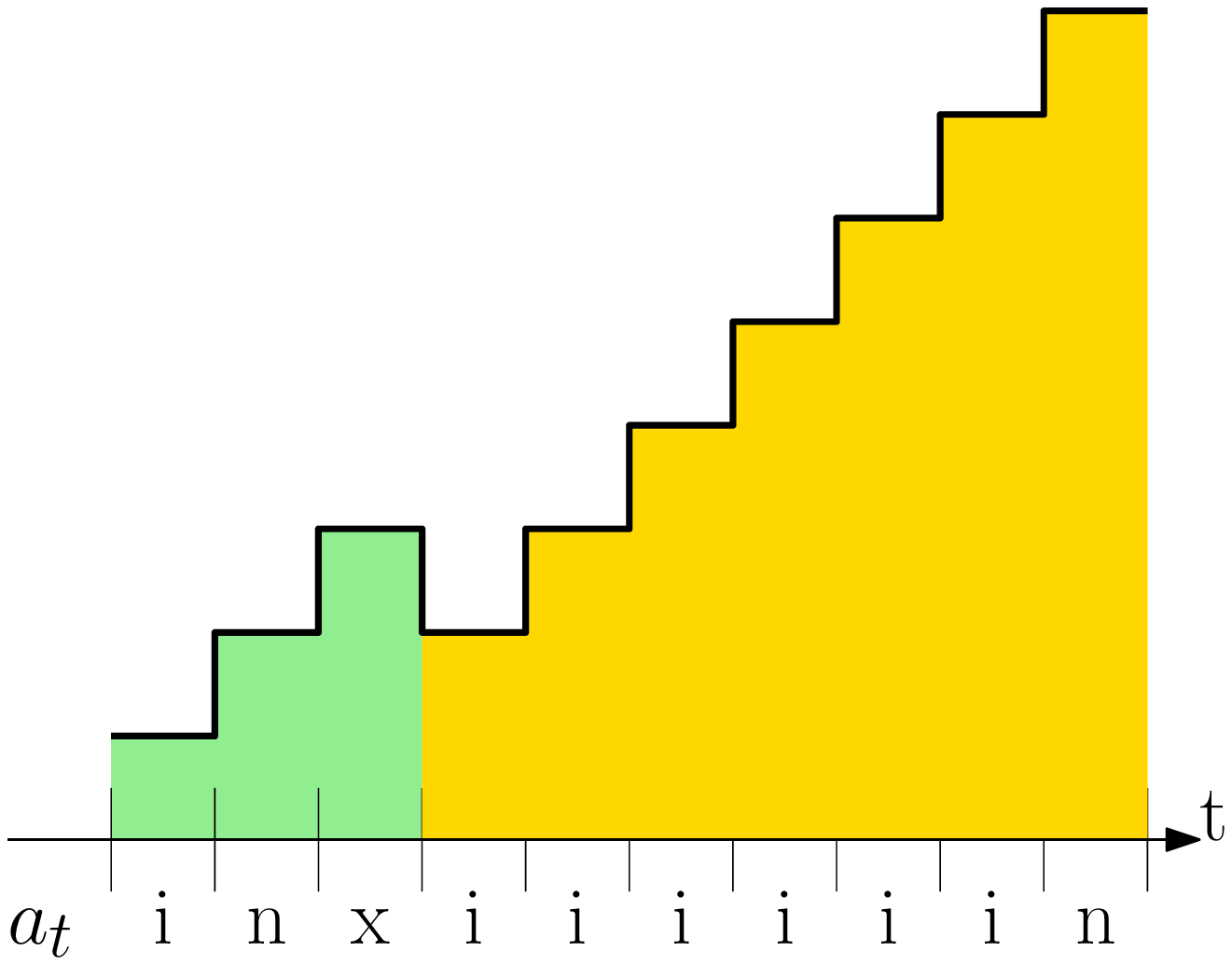}
\includegraphics[scale=0.42]{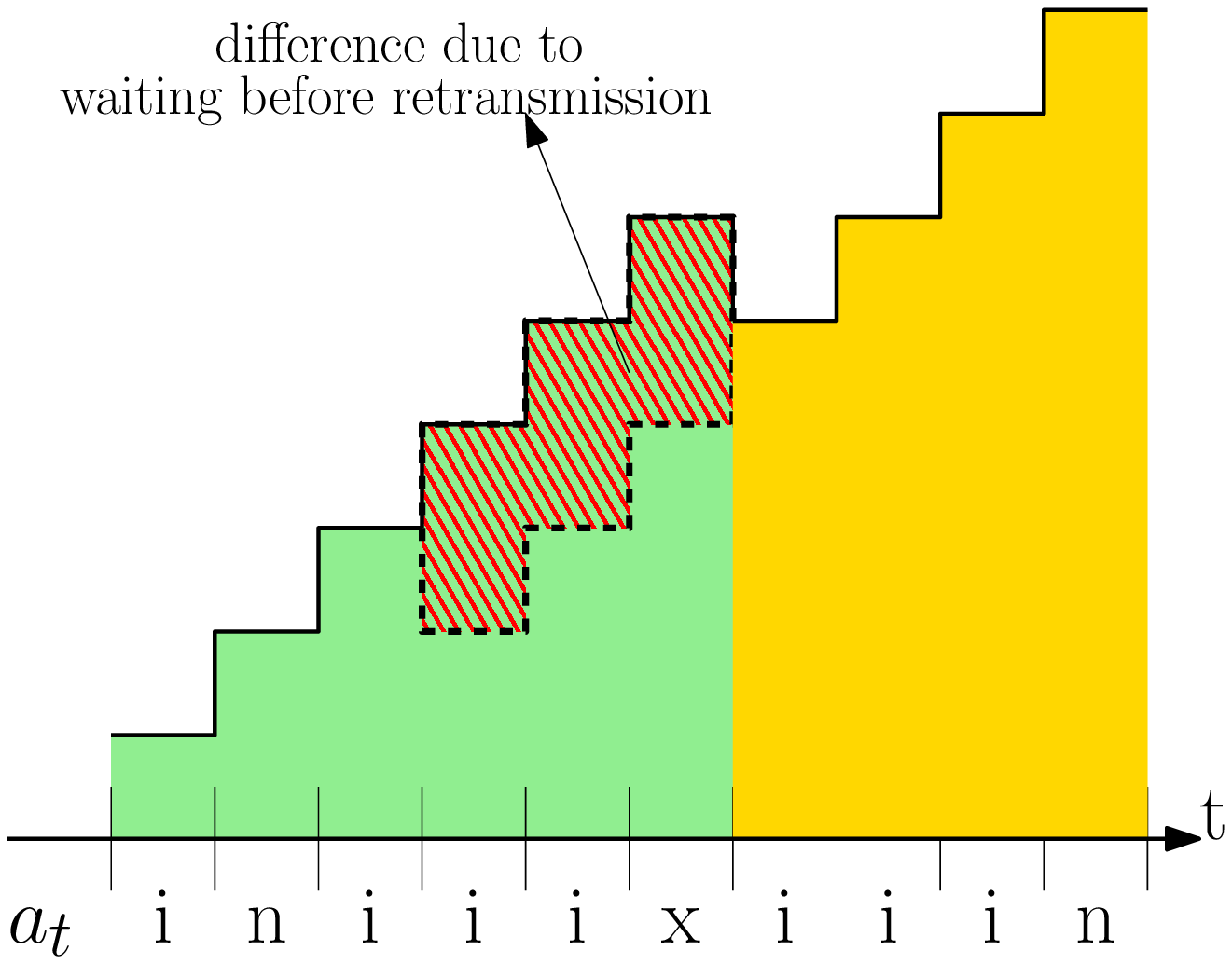}
\caption{The difference of the AoI for policies without and with idle slots before retransmissions. The figure on the left shows the evolution of age (height of the bars) when retransmission occurs immediately after an error in transmission whereas the figure on the right represents the evolution of age when retransmission occurs after some idle slots.}
\label{fig:proof}
\end{figure}

The  transition probabilities are given as follows (omitting the parenthesis from the state variables $(\delta,r)$):
\begin{equation}\label{eq:transitions}
\begin{split}
&\Pt(\delta+1, 0 | \delta, r, \idle) = 1, \\
&\Pt(\delta +1, 1 | \delta, r, \new) = g(0),  \\
&\Pt(1, 0 | \delta, r, \new) = 1 - g(0),  \\
&\Pt(\delta+1, r+1 | \delta, r,\retx) = g(r),\\
&\Pt(r+1,0 | \delta,r,\retx) = 1 - g(r),
\end{split}
\end{equation}
and $\Pt(\delta',r'|\delta,r,a)=0$  otherwise.  Note that the above equations set the retransmission count to $0$ after each successful transmission, and it is not allowed to take a retransmission action in states where the transmission count is $0$. Also, the property in Proposition~\ref{p1} is enforced by the first equation in \eqref{eq:transitions}, that is, $\Pt(\delta+1, 0| \delta, r, \idle) = 1$ (since retransmissions are not allowed in states $(\delta,0)$).
Since the starting state is $(1,0)$, it also follows that the state set of our CMDP can be described as
\begin{equation}
\label{eq:S}
\mathcal{S}=\{(\delta,r): r < \min\{\delta, r_{max}+1\}, \delta, r \in \mathbb{N} \}~.
\end{equation}


\section{Lagrangian Relaxation and the Structure of the Optimal Policy}
\label{sec:structure}

In this section, we derive the structure of the optimal policy for Problem~\ref{problem} based on \cite{sennott_1993}. A detailed treatment of finite state-finite action CMDPs is considered in \cite{Altman}, but here we need more general results that apply to countable state spaces. These results require certain technical conditions; 
roughly speaking, there must exist a deterministic policy that satisfies the transmission constraint while maintaining a finite average AoI, and any ``reasonable'' policy must induce a positive recurrent Markov chain. The precise formulation of the requirements is given in Appendix~\ref{Append_assumptions}, wherein Proposition~\ref{lemma_assump} shows that the conditions of \cite{sennott_1993} are satisfied for Problem~\ref{problem}. Given this result, we follow \cite{sennott_1993}  to characterize the optimal policy.

While there exists a stationary and deterministic optimal policy for countable-state finite-action average-cost MDPs \cite{sennott_1989,Puterman_book,Bertsekas2000}, this is not necessarily true for CMDPs \cite{sennott_1993,Altman}.  To solve the CMDP, we start with rewriting the problem in its Lagrangian form. The average Lagrangian cost of a policy $\pi$ with Lagrange multiplier $\eta \ge 0$ is defined as
\begin{align}
L^{\pi}_{\eta}=\lim_{T\rightarrow \infty }\frac{1}{T}\left(\Exp{\sum_{t=1}^T{\delta^{\pi}_t}}+\eta \Exp{\sum_{t=1}^T{\mathbbm{1}[a^{\pi}_t\neq \idle]}}\right),
\label{eq:lag}
\end{align}
and, for any $\eta$, the optimal achievable cost $L^*_{\eta}$ is defined as $L^*_{\eta}\triangleq \min_{\pi} L^{\pi}_{\eta}$. If the constraint on the transmission cost is less than one (i.e., $C_{max}<1$), then we have $\eta>0$, which will be assumed throughout the paper.\footnote{If $C_{max}=1$, a transmission (new update or retransmission) is allowed in every time slot, and instead of a CMDP we have an infinite state-space MDP with unbounded cost. Then it follows directly from part (ii) of the Theorem of \cite{sennott_1989} (whose conditions can be easily verified for our problem) that there exists an optimal stationary policy that satisfies the Bellman equations. In this paper we concentrate on the more interesting constrained case, while the derivation of this result is relegated to Appendix~\ref{sec:unconstrained}.} This formulation is equivalent to an unconstrained countable-state average-cost MDP in which the instantaneous overall cost is  $\delta_t+\eta\mathbbm{1}[a^{\pi}_t \neq \idle]$. A policy $\pi$ is called $\eta$-optimal if it achieves $L^*_\eta$. Since the assumptions of Proposition~3.2 of \cite{sennott_1993} are satisfied by Proposition~\ref{lemma_assump} in Appendix~\ref{Append_assumptions}, the former implies  that there exists a function $h_{\eta}(\delta,r)$, called the \textit{differential cost function}, satisfying
\begin{align}
\label{eq:Bellman}
h_{\eta}(\delta,r)+L^*_\eta&=\min_{a\in\{\idle,\new,\retx\}}\big(\delta+\eta \cdot \mathbbm{1}[a \neq \idle]+\Exp{h_{\eta}(\delta',r')}\big), 
\end{align}
called the \emph{Bellman optimality equations}, for all states $(\delta,r) \in \mathcal{S}$, where $(\delta',r')$ is the next state obtained from $(\delta,r)$ after taking action $a$. Furthermore, Proposition 3.2 of \cite{sennott_1993} also implies that the function $h_\eta$ satisfying \eqref{eq:Bellman} is unique up to an additive factor, and with 
selecting this additive factor properly, the \textit{differential cost function} also satisfies
\[
h_\eta(\delta,r) = \Exp{\sum_{t=0}^\infty (\delta_t+\eta \cdot \mathbbm{1}[a \neq \idle] - L_\eta^*)\big| \delta_0=\delta, r_0=r}.
\]
We also introduce the \textit{state-action cost function}
defined as
\begin{align}
Q_{\eta}(\delta,r,a)\triangleq \delta+\eta \cdot \mathbbm{1}[a \neq \idle]+\Exp{h_{\eta}(\delta',r')}
\label{eq:Bellman2}
\end{align}
for all $(\delta,r)\in \mathcal{S}, a \in \mathcal {A}$.
Then, also implied by Proposition~3.2 of \cite{sennott_1993}, the optimal deterministic policy for the Lagrangian problem with a given $\eta$ takes, for any $(\delta,r) \in \mathcal{S}$,  the action achieving the minimum in \eqref{eq:Bellman2}:
\begin{align}
\label{eq:opt_eta}
\pi_{\eta}^*(\delta,r) &\in \argmin_{a\in\{\idle,\new,\retx\}} Q_{\eta}(\delta,r,a)~. 
\end{align}

Focusing on deterministic policies, it is possible to characterize optimal policies for our CMDP problem: Based on Theorem~2.5 of \cite{sennott_1993}, we can prove the the following result:  


\begin{theorem}
There exists an optimal stationary policy for the CMDP in Problem~\ref{problem} that is optimal for the unconstrained problem considered in \eqref{eq:lag} for some $\eta=\eta^*$, and randomizes in at most one state. This policy can be expressed as a mixture of two deterministic policies $\pi^*_{\eta^*,1}$ and $\pi^*_{\eta^*,2}$ that differ in at most a single state $s$, and are both optimal for the Lagrangian problem \eqref{eq:lag} with $\eta=\eta^*$. More precisely, there exists  $\mu \in [0,1]$ such that the mixture policy $\pi^*_{\eta^*}$, which selects, in state $s$, $\pi^*_{\eta^*,1}(s)$ with probability $\mu$ and $\pi^*_{\eta^*,2}(s)$ with probability $1-\mu$, and otherwise follows these two policies (which agree in all other states)
is optimal for Problem~\ref{problem}, and the constraint in \eqref{eq:constraint} is satisfied with equality.   
\label{thm_mixture}
\end{theorem}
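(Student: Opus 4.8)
The plan is to turn the constrained problem into a parametric family of unconstrained ones and then pick the Lagrange multiplier that makes the constraint active, following the constrained-MDP theory of \cite{sennott_1993}, whose hypotheses hold for Problem~\ref{problem} by Proposition~\ref{lemma_assump}. The excerpt has already shown that, for each fixed $\eta\ge 0$, the Lagrangian problem \eqref{eq:lag} is an unconstrained average-cost MDP that admits a stationary deterministic optimal policy, namely the minimizer $\pi^*_\eta$ in \eqref{eq:opt_eta} of the Bellman equation \eqref{eq:Bellman}. Hence the task reduces to selecting $\eta^*$ so that $\pi^*_{\eta^*}$, after at most a single randomization, meets the transmission constraint \eqref{eq:constraint} with equality.

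First I would track the transmission cost of the $\eta$-optimal policies as a function of $\eta$. Writing $\underline C(\eta)$ and $\overline C(\eta)$ for the smallest and largest values of $C^{\pi}$ over deterministic $\eta$-optimal policies $\pi$, I would argue that both are non-increasing in $\eta$, since raising the per-transmission penalty $\eta$ can only discourage transmissions. At $\eta=0$ the penalty is absent, so an always-transmitting policy is optimal and $\overline C(0)=1\ge C_{max}$; as $\eta\to\infty$ the optimum balances an age term of order $1/p$ against a penalty of order $\eta p$ for transmission rate $p$, forcing $p\to 0$, so $\underline C(\eta)<C_{max}$ eventually (recall $C_{max}>0$). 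Monotonicity together with these boundary values produces a critical multiplier $\eta^*$ at which the attainable transmission cost brackets the target, $\underline C(\eta^*)\le C_{max}\le\overline C(\eta^*)$; locating $\eta^*$ precisely requires a standard left/right-limit argument exploiting that limits of $\eta$-optimal policies remain $\eta^*$-optimal.

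Next I would construct the policy at $\eta^*$. If some deterministic $\eta^*$-optimal policy already attains $C_{max}$, it is optimal and no randomization is needed (take $\mu\in\{0,1\}$). Otherwise the bracketing is realized by two deterministic $\eta^*$-optimal policies $\pi^*_{\eta^*,1}$ and $\pi^*_{\eta^*,2}$ with $C^{\pi^*_{\eta^*,1}}\ge C_{max}\ge C^{\pi^*_{\eta^*,2}}$ that, as explained below, can be taken to differ only in the action chosen in one state $s$. Letting the stationary randomized policy $\pi^*_{\eta^*}$ play $\pi^*_{\eta^*,1}(s)$ with probability $\mu$ and $\pi^*_{\eta^*,2}(s)$ otherwise, its transmission cost varies continuously with $\mu$ from $C^{\pi^*_{\eta^*,2}}$ to $C^{\pi^*_{\eta^*,1}}$, so the intermediate value theorem yields a $\mu$ with $C^{\pi^*_{\eta^*}}=C_{max}$. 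Since both actions in $s$ attain the minimum in \eqref{eq:Bellman}, $\pi^*_{\eta^*}$ still picks Bellman-minimizing actions everywhere and therefore achieves $L^*_{\eta^*}$. Optimality for Problem~\ref{problem} then follows from the saddle-point chain, valid for every feasible $\pi$: $J^{\pi}\ge L^{\pi}_{\eta^*}-\eta^* C_{max}\ge L^*_{\eta^*}-\eta^* C_{max}=J^{\pi^*_{\eta^*}}$, where the first step uses $C^{\pi}\le C_{max}$ and $\eta^*\ge 0$, the second uses $L^{\pi}_{\eta^*}\ge L^*_{\eta^*}$, and the equality uses that the constraint is tight for $\pi^*_{\eta^*}$.

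The hard part is the deferred claim that the two bracketing policies can be chosen to differ in a single state, equivalently that the optimal randomized policy randomizes in at most one state. I would obtain this from the occupation-measure (linear-programming) description of the average-cost CMDP: stationary policies correspond to state-action occupation measures obeying the flow-balance equations together with the \emph{single} transmission constraint \eqref{eq:constraint}, and an optimal measure may be taken to be an extreme point of this feasible set. At an extreme point each recurrent state uses a single action except for a number of states bounded by the count of binding constraints outside flow balance, which here is one; this is exactly the ``one constraint, one randomization'' bookkeeping underlying Theorem~2.5 of \cite{sennott_1993}. The genuine technical obstacle is that this vertex-counting argument is elementary only for finitely many states, so extending it to our countable state space is what really needs the positive-recurrence and finite-cost conditions established in Proposition~\ref{lemma_assump}, which let one pass to finite truncations where the extreme-point argument applies and then take limits.
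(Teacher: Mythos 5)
Your overall skeleton is sound: Lagrangian relaxation, monotonicity of the transmission cost in $\eta$, bracketing to locate $\eta^*$ exactly as in \eqref{eq:eta_star}, an intermediate-value choice of $\mu$ (which is legitimately continuous in $\mu$ by a renewal--reward argument, thanks to the positive recurrence guaranteed by Proposition~\ref{lemma_assump}), and the saddle-point chain $J^{\pi}\ge L^{\pi}_{\eta^*}-\eta^* C_{max}\ge L^*_{\eta^*}-\eta^* C_{max}=J^{\pi^*_{\eta^*}}$ for feasible $\pi$. This reconstructs the ingredients that the paper simply imports from \cite{sennott_1993} (its Lemmas~3.4, 3.7, 3.9 and 3.10), after the paper verifies Sennott's hypotheses in Proposition~\ref{lemma_assump}; the paper's own proof of Theorem~\ref{thm_mixture} is essentially that verification plus a citation of Theorem~2.5, Proposition~3.2 and Lemma~3.9 of \cite{sennott_1993}.

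The genuine gap is at the one claim that gives the theorem its content: the existence of two $\eta^*$-optimal \emph{deterministic} policies differing in at most a single state. You defer exactly this to an occupation-measure extreme-point count followed by finite truncation, and you concede that the truncation-and-limit step is the crux --- but that step is missing, and it does not follow from Proposition~\ref{lemma_assump} without substantial additional work. The ``one constraint $\Rightarrow$ at most one randomized state'' vertex argument is elementary only for finite CMDPs; to push it to this countable-state, unbounded-cost problem you would need to prove (a) that optimal values and policies of the truncated CMDPs converge to those of Problem~\ref{problem} (finite-state approximation of average-cost countable-state problems is delicate and is a separate body of theory), (b) that the at-most-one-randomization property and the equality $C^{\pi}=C_{max}$ survive the limit, and (c) that optimality itself passes to the limit. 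None of this is routine. Note also that the result you are trying to reprove is obtained in \cite{sennott_1993} by a completely different, constructive route: take limits $\pi,\pi'$ of $\eta_n$-optimal policies along $\eta_n\uparrow\eta^*$ and $\eta_n\downarrow\eta^*$, enumerate the recurrent states, and switch actions from one policy to the other one state at a time, producing deterministic $\eta^*$-optimal policies $\pi^k$ whose transmission costs cross $C_{max}$ at some index $k$; then $\pi^k$ and $\pi^{k+1}$ differ in exactly one state. A self-contained proof would have to either carry out that interpolation (including why each intermediate $\pi^k$ remains $\eta^*$-optimal) or genuinely establish your truncation limit; as written, your proposal proves the routine parts and leaves the theorem's distinctive claim as a plan.
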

\begin{proof}
By Proposition~\ref{lemma_assump} in Appendix~\ref{Append_assumptions}, Theorem~2.5, Proposition~3.2, and Lemma~3.9 of~\cite{sennott_1993} hold for Problem~\ref{problem}. By
Theorem~2.5 of~\cite{sennott_1993}, there exists an optimal stationary policy that is a mixture of two deterministic policies, $\pi^*_{\eta^*,1}$ and $\pi^*_{\eta^*,2}$,  which differ in at most one state and are $\eta^*$-optimal by Proposition~3.2 of \cite{sennott_1993} satisfying \eqref{eq:Bellman} and \eqref{eq:Bellman2}. From Lemma 3.9 of~\cite{sennott_1993}, the mixture policy $\pi^*_{\mu}$, for any $\mu \in [0,1]$, also satisfies \eqref{eq:Bellman} and \eqref{eq:Bellman2}, and is optimal for the unconstrained problem in \eqref{eq:lag} with $\eta=\eta^*$. From the proof of Theorem~2.5 of~\cite{sennott_1993}, there exists a $\mu \in [0,1]$ such that $\pi^*_{\eta^*}$ satisfies the constraint in \eqref{eq:constraint} with equality. 
This completes the proof of the theorem.
\end{proof}

Some other results in \cite{sennott_1993} will be useful in determining $\pi^*_{\eta^*}$. For any $\eta>0$, let $C_{\eta}$ and $J_{\eta}$ denote the average number of transmissions and average AoI, respectively, for the optimal policy $\pi_{\eta}^*$. Note that these are multivalued functions since there might be more than one optimal policy for a given $\eta$. 
Note also that, $C_\eta$ and $J_\eta$ can be computed directly by finding the stationary distribution of the chain, or estimated empirically by running the MDP with policy $\pi^*_\eta$. From Lemma~3.4 of \cite{sennott_1993}, $L^*_{\eta}$, $C_\eta$ and $J_\eta$ are monotone functions of $\eta$:
if $\eta_1 < \eta_2$, we have $C_{\eta_1} \ge C_{\eta_2}$, $J_{\eta_1} \le J_{\eta_2}$ and $L^*_{\eta_1} \le L^*_{\eta_2}$. 
This statement is also intuitive since $\eta$ effectively represents the cost of a single transmission in \eqref{eq:Bellman} and \eqref{eq:Bellman2}, as $\eta$ increases, the average number of transmissions of the optimal policy cannot increase, and as a result, the AoI cannot decrease.

To determine the optimal policy, one needs to find $\eta^*$, the policies $\pi^*_{\eta^*,1}$ and $\pi^*_{\eta^*,2}$, and the weight $\mu$.
In fact, \cite{sennott_1993} shows that $\eta^*$ is defined as 
\begin{align}
\eta^* \triangleq \inf\{\eta>0:C_{\eta}\le C_{max}\},
\label{eq:eta_star}
\end{align}
where the inequality $C_{\eta}\le C_{max}$ is satisfied if it is satisfied for at least one value of (multivalued) $C_{\eta}$. By Lemma~3.12 of \cite{sennott_1993} and Proposition~\ref{lemma_assump}, $\eta^*$ is finite, and $\eta^*>0$ if $C_{max}<1$.

If $C^{\pi^*_{\eta^*,i}}=C_{max}$ for $i=1$ or $i=2$, then it is the optimal policy, that is, $\pi^*_{\mu}=\pi^*_{\eta^*,i}$ and $\mu=1$ if $i=1$ and $0$ if $i=2$.
Otherwise one needs to select $\mu$ such that $C^{\pi^*_\mu}=C_{max}$: that is, if $C^{\pi^*_{\eta^*,2}} < C_{max} < C^{\pi^*_{\eta^*,1}}$, then
\begin{equation}
\label{eq:random}
\mu = \frac{C_{max}-C^{\pi^*_{\eta^*,2}} }{C^{\pi^*_{\eta^*,1}}-C^{\pi^*_{\eta^*,2}}},  
\end{equation}
which results in an optimal policy.

In practice, finding both $\eta^*$ and the policies $\pi^*_{\eta^*,1}$ and $\pi^*_{\eta^*,2}$ is hard. However, given two monotone sequences sequences $\eta_n \uparrow \eta^*$ and $\eta'_n \downarrow \eta^*$, there is a subsequence of $\eta_n$ (resp., $\eta'_n$) such that the corresponding subsequence of the $\eta_n$-optimal policies $\pi^*_{\eta_n}$ ($\eta'_n$-optimal policies $\pi^*_{\eta'_n}$, resp.) satisfying the Bellman equation \eqref{eq:Bellman} converge. Then the limit points $\pi$ and $\pi'$\footnote{$\pi_n \to \pi$ if for any state $s$, $\pi_n(s)=\pi(s)$ for $n$ large enough.} are $\eta^*$-optimal by Lemma~3.7 (iii) of \cite{sennott_1993} and $C^{\pi} \ge C_{max} \ge C^{\pi'}$ by the monotonicity of $C_{\eta}$ and the same Lemma~3.7. Although there is no guarantee that $\pi$ and $\pi'$ only differ in a single point, we can combine them to get an optimal randomized policy using $\mu$ defined in \eqref{eq:random}. In this case, Lemma~3.9 of \cite{sennott_1993} implies that the policy that first randomly selects if it should use $\pi$ or $\pi'$ (choosing $\pi$ with probability $\mu$) and then uses the selected policy forever is $\eta^*$-optimal. However, since $(1,0)$ is a positive recurrent state of both policies and they have a single recurrent class by Proposition~3.2 of \cite{sennott_1993}, we can do the random selection of between $\pi$ and $\pi'$ independently every time the system gets into state $(1,0)$ without changing the long-term average or expected AoI and transmission cost (note that one cannot choose randomly between the two policies in, e.g., every step). Thus, the resulting randomized policy is $\eta^*$-optimal, and since $\mu$ is selected in such a way that the total transmission cost is $C_{max}$, it is also an optimal solution of Problem~\ref{problem} by Lemma~3.10 of \cite{sennott_1993}. Note that to derive two $\eta^*$-optimal policies, which provably differ only in a single state, a much more elaborate construction is used in  \cite{sennott_1993}. However, in practice, $\pi$ and $\pi'$ obtained above are often the same except for a single state. Furthermore, we can approximate $\pi_1$ and $\pi_2$ by $\pi^*_{\eta_n}$ and $\pi^*_{\eta'_n}$ for $n$ large enough. This idea is explored in the next section.

\if0
Based on Lemma~3.7 and the proof of Theorem~2.5 of \cite{sennott_1993},  the two $\eta^*$ optimal policies which differ in at most one state as defined in Theorem~\ref{thm_mixture} can be described in the following way: consider two sequences of Lagrange multipliers $\eta_n\uparrow \eta^*$ and $\eta'_{n} \downarrow \eta^*$. Clearly, there is a subsequence of $\eta_n$ (resp., $\eta'_n$) such that the corresponding subsequence of $\eta_n$-optimal policies $\pi^*_{\eta_n}$ ($\eta'_n$-optimal policies $\pi^*_{\eta'_n}$, resp.) converge. For simplicity, with a slight abuse of notation we assume that $\{\eta_n\}$ and $\{\eta'_n\}$ are already these subsequences, and $\pi^*_{\eta_n} \to \pi^*$ (i.e., for any state $s$, $\pi^*_{\eta_n}(s)=\pi^*(s)$ for $n$ large enough) and $\pi^*_{\eta'_n} \to \pi^*$. It follows from the monotonicity of $C_\eta$ that $\liminf_{n\to \infty} C_{\eta_n}\ge C_{max} \ge \liminf_{n\to \infty} C_{\eta'_n}$. Enumerating the recurrent states for $\pi^*$ by the integers (for simplicity, we can assume that $(1,0)$ is state $0$), and taking an $\eta^*$-optimal deterministic policy $\pi_{\eta^*}$, one can create a sequence of policies from $\pi^*$ and $\pi_{\eta^*}$: $\pi^k(s)=\pi^*(s)$ for all 

Note that if $C_{\eta^*}= C_{max}$, the optimal deterministic policy $\pi^*_{\eta^*}$ is also optimal for Problem~\ref{problem} following directly from Theorem~\ref{thm_mixture}, which corresponds to the case where $\mu=0$. If $C_{\eta^*}\neq C_{max}$, on the other hand, we need to mix two deterministic policies optimal for $\eta^*$ and find the optimal mixing coefficient $\mu$. 

 Consider two sequences of Lagrange multipliers $\eta_{1n} \uparrow \eta^*$ and $\eta_{2n} \downarrow \eta^*$ such that $\lim \inf_{n\rightarrow \infty} C_{\eta_{1n}}\ge C_{max} \ge \lim \inf_{n\rightarrow \infty} C_{\eta_{2n}}$. Let  $\pi_{1n}\rightarrow \pi_1$ and $\pi_{2n}\rightarrow \pi_2$ where $\pi_{1n}$ ($\pi_{2n}$) denotes $\eta_{1n}$ ($\eta_{2n}$) optimal policy. If $C_{\eta^*}<0$, we modify a policy $\pi_1$ such that $\pi_1(s)=\pi_{\eta^*}$ for any state $s$ which is not in the recurrent class. Consider a deterministic policy for the enumerated state space such that $\pi^k(s)\triangleq \pi_1(s)$ if $s<k$ and $\pi^k(s) \triangleq \pi_{\eta^*}(s)$ if $s\geq k$. Note that $\pi_1$, $\pi_{\eta^*}$, and $\pi^k$, $\forall k\geq 0$ are $\eta^*$ optimal by Proposition 3.2 and Lemma 3.7 of \cite{sennott_1993}. It has been shown in \cite{sennott_1993} that there exists a $k$ such that $C^{\pi^k}< C_{max}$ and $C^{\pi^{k+1}} > C_{max}$. Thus, we can obtain two deterministic $\eta^*$ optimal policies which differ in a single state $k$ \footnote{A similar procedure can be followed with $\pi_2$ when $C_{\eta^*}>0$ \cite{sennott_1993}.}.

In the theorem, we use  $\pi^*_{\eta,i}$, $i=\{1,2\}$ to denote the different optimal deterministic policies for the unconstrained MDP with Lagrange multiplier $\eta^*$. Then the $\pi_{\eta^*,1}$ and $\pi_{\eta^*,2}$ in Theorem~\ref{thm_mixture} and in \eqref{eq:random} can be found as: $\pi_{\eta^*,1}\triangleq \pi^k$ and $\pi_{\eta^*,2}\triangleq \pi^{k+1}$. 
\fi


Theorem~\ref{thm_mixture} and the succeeding discussion present the general structure of the optimal policy. In Section~\ref{sec:primal}, for practical implementation,  a computationally efficient heuristic algorithm is proposed based upon the discussion in this section.




\section{An Iterative Algorithm to Minimize the AoI under an Average Cost Constraint}
\label{sec:primal}

 
While our state space is countably infinite, since the age can be arbitrarily large ($r_{max}$ may  also be infinite), in practice we can approximate the countable state space with a large but finite space by setting an upper bound on the age (which will be denoted by $N$), and by selecting a finite $r_{max}$ (whenever the chain would leave this constrained state space, we truncate the value of the age and/or the retransmission number to $N$ and $r_{max}$, respectively); this gives a finite state space approximation to the problem similarly to \cite{Altman2010,hsuage2017}. Clearly, letting $N$ and $r_{max}$ go to infinity, the optimal policy for the restricted state space will converge to that of the original problem.

When we consider the finite state space approximation of our problem, we can employ the \emph{relative value iteration} (RVI) \cite{Puterman_book} algorithm to solve \eqref{eq:Bellman} for any given $\eta$, and hence find (an approximation of) the optimal policy $\pi^*_\eta$. Note that the finite state space approximation is needed for the practical implementation of the RVI algorithm since each iteration in RVI requires the computation of the value function for each state-action pair (for the infinite state space we would need to use some sort of parametric approximation of the states or the value functions, which is out of the scope of this paper). The pseudocode of the RVI algorithm is given in Algorithm~\ref{alg:RVI}.  To simplify the notation, the dependence on $\eta$ is suppressed in the algorithm for $h, V$ and $Q$.

\begin{algorithm}
\begin{small}
\label{alg:RVI}
\DontPrintSemicolon
\SetAlgoLined
\SetKwInOut{Input}{Input}\SetKwInOut{Output}{Output}
\Input{Lagrange parameter $\eta$, error probability  $g(r)$}
\BlankLine
  $(\delta^{ref},r^{ref})$ \tcc{choose an arbitrary but fixed reference state}
$n\leftarrow 0~$    \tcc{iteration counter}

$h_0^{N\times r_{max}}\leftarrow \mathbf{0}~$   \tcc{initialization} 

\BlankLine 
\While{$1$ \tcc{until convergence} }  {    
    
    \For{state $s=(\delta,r)\in [1,\ldots,N]\times [1,\ldots,r_{max}]$ }{   
    
    \For{action $a\in\mathcal{A}$}{    
    
	$Q_{n+1}(\delta,r,a)\leftarrow\delta+\eta\cdot\mathbbm{1}[a^{\pi}\neq \idle]+	\Exp{h_n(\delta',r')}$\;  
    }

${V}_{n+1}(\delta,r)\leftarrow\min_{a}(Q_{n+1}(\delta,r,a))$\;
$h_{n+1}(\delta,r)\leftarrow{V}_{n+1}(\delta,r)-{V}_{n+1}(\delta^{ref},r^{ref})$\;
    
    }
    
    \eIf{$|h_{n+1}-h_n|\leq \epsilon$}{
\tcc{compute the optimal policy}

\For{$(\delta ,r) \in [1,\ldots,N]\times [1,\ldots,r_{max}]$}{    
	
	 $\pi^*_{\eta}(\delta,r)\leftarrow\argmin_{a}(Q(\delta,r,a))$ 
    
    }
    \Return $\pi^*$ \; 
}{
increase the iteration counter:
       $n\leftarrow n+1$ \;
}


    
	
    
          
}
\caption{Relative value iteration (RVI) algorithm for a given $\eta$.}
\end{small}
\end{algorithm}

After presenting an algorithm that can compute the optimal deterministic policy $\pi^*_\eta$ for any given $\eta$ (more precisely, an arbitrarily close approximation thereof in the finite approximate MDP), we need to find the particular Lagrange multiplier $\eta^*$ as defined by \eqref{eq:eta_star}.
As the simplest solution, we would need to generate $C_{\eta}$ for a reasonably large range of $\eta$ values to determine $\eta^*$. This could be approximated by computing $C_\eta$ for a fine grid of $\eta$ values, but this approach might be computationally demanding (note that generating each point requires running an instance of RVI). 

Instead, we can use the following heuristic: 
With the aim of finding a single $\eta$ value with $C_\eta\approx C_{max}$, we start with an initial parameter $\eta^0$, and run an iterative algorithm updating $\eta$ as $\eta^{m+1} = \eta^m+\alpha_m (C_{\eta^m}-C_{max})$ for a step size parameter $\alpha_m$\footnote{$\alpha_m$ is a positive decreasing sequence and satisfies the following conditions: $\sum_m \alpha_m = \infty$ and $\sum_m \alpha^2_m < \infty$ from the theory of stochastic approximation \cite{stochastic_approx}.} (note that for each step we need to run RVI to be able to determine $C_{\eta^m}$). We continue this iteration until $|C_{\eta^{m}}-C_{max}|$ becomes smaller than a given threshold, and denote the resulting value by ${\eta}^*$. We can increase or decrease the $\eta^*$ value until $\eta^*$ and its modification satisfy the conditions (note that with a finite state space, which is an approximation always used when computing an optimal policy numerically, $\pi_\eta$, and consequently $C_\eta$ and $J_\eta$, are piecewise constant functions of $\eta$, thus the step size must be chosen sufficiently large to change the average transmission cost). 

In order to obtain two deterministic policies and the corresponding mixing coefficient, based on the discussion at the end of Section~\ref{sec:structure}, we want to find optimal policies for $\eta$ values slightly smaller and larger than $\eta^*$, and so we compute the optimal policies (by RVI) for $\eta^*\pm \xi$  where $\xi$ is a small perturbation and obtain a mixture coefficient according to \eqref{eq:random} as
\begin{equation}
\mu=\frac{C_{max}-C_{\eta^*+ \xi}}{C_{\eta^*- \xi}-C_{\eta^*+ \xi}}~.
\end{equation}
If the optimal policies differ only in a single state, we can randomize in that state (by Theorem~\ref{thm_mixture}), while, if they are more different, we can randomly select between the policies (with probabilities $\mu$ and $1-\mu$) every time after a successful transmission (i.e., when the system is in state $(1,0)$), as discussed at the end of Section~\ref{sec:structure}.

Numerical results obtained by implementing the above heuristics in order to minimize the average AoI with HARQ will be presented in Section~\ref{sec:results}. In the next section, we focus on the simpler scenario with the classical ARQ protocol.

\section{AoI with Classical ARQ Protocol under an Average Cost Constraint}
\label{sec:arq}

In the classical ARQ protocol, failed transmissions are discarded at the destination and the receiver tries to decode each retransmission as a new message. In the context of AoI, there is no point in retransmitting an undecoded packet since the probability of a successful transmission is the same for a retransmission and for the transmission of a new update.  Hence, the state space reduces to  $\delta \in \{1,2,\ldots\}$ as $r_t=0$ for all $t$, and the action space reduces to $\mathcal{A}\in\{\new , \idle\}$, and the probability of error $p\triangleq g(0)$ is fixed for every transmission attempt.\footnote{This simplified model with classical ARQ protocol and Lagrangian relaxation is equivalent to the work in \cite{Altman2010} when $\eta$ is considered to be the cost of a single transmission and the assumption of a perfect transmission channel in \cite{Altman2010} is ignored.}  State transitions in (\ref{eq:transitions}), Bellman optimality equations \cite{Puterman_book,Bertsekas2000} for the countable-state MDP in (\ref{eq:Bellman}), and the RVI algorithm with the finite state approximation can all be simplified accordingly. We define
\begin{align}
Q_{\eta}(\delta,\idle)&\triangleq \delta+ h_{\eta}(\delta+1),\label{eq:update} \\
Q_{\eta}(\delta,\new)&\triangleq \delta+\eta+p h_{\eta}(\delta+1)+(1-p)h_{\eta}(1), \label{eq:update1}
\end{align}
where $h_{\eta}(\delta)$  is the optimal differential value function satisfying the Bellman optimality equation
\begin{align}
\label{eq:update2}
h_{\eta}(\delta)+L^*_{\eta} \triangleq \min{\{Q_{\eta}(\delta,\idle),Q_{\eta}(\delta,\new)\}}, ~\forall \delta \in \{1,2,\ldots \}. 
\end{align}
Thanks to these simplifications, we are able to provide a closed-form solution to the corresponding Bellman equations in \eqref{eq:update}, \eqref{eq:update1} and \eqref{eq:update2}. 
\begin{lemma}
\label{lem:ARQ}
The policy that satisfies the Bellman optimality equations for the standard ARQ protocol is deterministic and has a threshold structure:
\begin{align*}
\pi^*(\delta)=\begin{cases} 
\new &\mbox{if} ~\delta \geq \Delta_\eta,   \\
\idle &\mbox{if} ~\delta < \Delta_\eta.
\end{cases}
\end{align*}
for some integer $\Delta_\eta$ that depends on $\eta$.
\end{lemma}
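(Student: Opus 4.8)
The plan is to reduce the claim to a single monotonicity property of the differential cost function, namely that $h_\eta(\delta)$ is non-decreasing in $\delta$, and then read off the threshold directly from the Bellman comparison between the two actions. First I would compute the sign of the difference $Q_\eta(\delta,\new)-Q_\eta(\delta,\idle)$ using \eqref{eq:update} and \eqref{eq:update1}: the $\delta$ and the $p\,h_\eta(\delta+1)$ terms combine so that
\[
Q_\eta(\delta,\new)-Q_\eta(\delta,\idle)=\eta-(1-p)\big(h_\eta(\delta+1)-h_\eta(1)\big).
\]
Since $p=g(0)<1$, action $\new$ is (weakly) preferred in \eqref{eq:update2} exactly when $h_\eta(\delta+1)-h_\eta(1)\ge \eta/(1-p)$. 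Thus the entire lemma hinges on showing that the left-hand side is non-decreasing in $\delta$, which is immediate once $h_\eta$ itself is non-decreasing.

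To prove monotonicity of $h_\eta$, I would argue at the level of (relative) value iteration, which converges to $h_\eta$ on the finite-state approximation used in Algorithm~\ref{alg:RVI} (and then let $N\to\infty$). Consider the one-step operator
\[
(TV)(\delta)=\delta+\min\big\{\,V(\delta+1),\ \eta+pV(\delta+1)+(1-p)V(1)\,\big\}.
\]
I claim $T$ maps non-decreasing functions to non-decreasing functions: for a fixed non-decreasing $V$, the term $V(\delta+1)$ and the term $\eta+pV(\delta+1)+(1-p)V(1)$ are each non-decreasing in $\delta$ (the coefficients $1$ and $p$ are non-negative and $V(1)$ is constant), so their pointwise minimum is non-decreasing, and adding $\delta$ preserves this. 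Starting the iteration from $V_0\equiv 0$ and noting that the constant subtraction of $V_{n+1}(\delta^{ref},r^{ref})$ in RVI does not affect monotonicity, an induction shows every iterate is non-decreasing; passing to the limit yields that $h_\eta$ is non-decreasing. As an intuitive check independent of value iteration, one can couple two copies of the chain started at ages $\delta'<\delta''$ under a common action/outcome sequence: the larger-age copy stays at least as old until the first successful transmission, after which both reset to age $1$, so the representation of $h_\eta$ as the expected accumulated relative cost given earlier in the excerpt is monotone in the initial age.

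With monotonicity in hand the threshold follows immediately: $h_\eta(\delta+1)-h_\eta(1)$ is non-decreasing in $\delta$, so the set $\{\delta:\ h_\eta(\delta+1)-h_\eta(1)\ge \eta/(1-p)\}$ is an up-set, i.e.\ of the form $\{\delta\ge\Delta_\eta\}$. Defining $\Delta_\eta$ to be its smallest element (which is finite because the age cost forces $h_\eta$ to grow linearly, hence without bound) gives exactly $\pi^*(\delta)=\new$ for $\delta\ge\Delta_\eta$ and $\pi^*(\delta)=\idle$ for $\delta<\Delta_\eta$, with any ties broken in favour of $\new$.

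The \emph{main obstacle} I anticipate is the monotonicity step, and in particular making it rigorous on the countably infinite state space rather than merely asserting it. The operator argument is clean on the truncated chain, so the delicate part is justifying both that monotonicity is inherited in the $N\to\infty$ limit and that the limit of the relative value iterates is indeed the differential cost function $h_\eta$ guaranteed by Proposition~3.2 of \cite{sennott_1993}; the coupling representation is a convenient fallback that sidesteps the truncation entirely. Everything after monotonicity is a one-line consequence of the sign computation above.
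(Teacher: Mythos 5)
Your reduction is exactly the paper's: Appendix~\ref{AppendA} writes the same identity $Q_{\eta}(\delta,\new)-Q_{\eta}(\delta,\idle)=\eta-(1-p)\bigl(h_{\eta}(\delta+1)-h_{\eta}(1)\bigr)$ (phrased there as submodularity of $Q_\eta$) and likewise reduces the lemma to $h_\eta$ being non-decreasing in $\delta$. Where you diverge is in how monotonicity is proved. The paper never touches value iteration: it couples the two chains started at $\delta^1\le\delta^2$ under a common channel realization, feeds the \emph{optimal} action sequence of the $\delta^2$-system into the $\delta^1$-system, unrolls the Bellman inequality for the $\delta^1$-system (each step is an inequality because those actions are suboptimal for it), and invokes the accumulated-relative-cost representation of $h_\eta$ only for the $\delta^2$-system, yielding $h_\eta(\delta^1)\le \Exp{\sum_{t}(\delta^1_t+\eta\,\mathbbm{1}[a^2_t\neq\idle]-L^*_\eta)}\le h_\eta(\delta^2)$. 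This operates directly on the countable state space, so the two difficulties you flag in your primary route --- convergence of RVI on the truncated chain to its differential cost function, and passage of monotonicity through the $N\to\infty$ limit to the $h_\eta$ of Proposition~3.2 of \cite{sennott_1993} --- simply never arise; since the paper treats truncation purely as a numerical device and proves nothing about it, on your route these would be genuine proof obligations, not formalities. Your fallback coupling sketch is essentially the paper's argument, but to be rigorous it must take the asymmetric form just described: one cannot compare the accumulated relative costs of the two \emph{optimal} policies directly, since the optimizing action sequence depends on the starting state; the inequality direction comes precisely from running a single (common) action sequence on both chains and using the Bellman inequality on the smaller-age copy. With that repair your proposal is sound: the operator argument, were the truncation limit justified, would be a more mechanical alternative that also adapts to structural statements on the HARQ state space, whereas the paper's coupling buys rigor on the infinite state space at no extra cost.
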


\begin{proof}
The proof is given in Appendix~\ref{AppendA}. 
\end{proof}

The next lemma characterizes the possible values of the threshold defined in Lemma~\ref{lem:ARQ}.
\begin{lemma}
\label{thm1}
Under the standard ARQ protocol, the $\eta$-optimal value of the threshold $\Delta_\eta$ can be found in closed-form:
\begin{align*}
\Delta^*_\eta \in \left\{ \left\lfloor\frac{\sqrt{2\eta(1-p)+p}-p}{1-p} \right\rfloor, \left\lceil\frac{\sqrt{2\eta(1-p)+p}-p}{1-p} \right\rceil  \right\}.
\end{align*}
\end{lemma}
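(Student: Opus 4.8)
The plan is to exploit the threshold structure already established in Lemma~\ref{lem:ARQ}: once we know the $\eta$-optimal policy transmits a fresh update exactly when $\delta \ge \Delta_\eta$ and idles otherwise, determining $\Delta^*_\eta$ reduces to a one-dimensional optimization over the integer threshold. Concretely, I would parametrize the family of threshold policies by $\Delta \in \{1,2,\dots\}$, write the average Lagrangian cost $L_\eta(\Delta)=J(\Delta)+\eta\,C(\Delta)$ as an explicit function of $\Delta$, and then minimize it over $\Delta$; since Lemma~\ref{lem:ARQ} guarantees the optimum is of this form, the minimizing $\Delta$ is precisely $\Delta_\eta$.

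To obtain $L_\eta(\Delta)$ in closed form I would analyze the Markov chain induced by a threshold-$\Delta$ policy via the renewal--reward theorem, taking the renewal instants to be successive visits to the reset state $\delta=1$. Within one regeneration cycle the age first climbs deterministically through $1,2,\dots,\Delta-1$ (the idle phase), after which the source transmits; because each transmission succeeds independently with probability $1-p$, the number $K$ of transmission slots per cycle is geometric with $\Exp{K}=\frac{1}{1-p}$ and $\Exp{K(K-1)}=\frac{2p}{(1-p)^2}$. Summing the age over a cycle (idle contribution $\tfrac{(\Delta-1)\Delta}{2}$ plus transmit contribution $K\Delta+\tfrac{K(K-1)}{2}$) and dividing the relevant expectations by the expected cycle length $\Delta-1+\tfrac{1}{1-p}$ yields
\begin{align*}
L_\eta(\Delta)=\frac{N(\Delta)}{D(\Delta)},\qquad N(\Delta)=\frac{(\Delta-1)\Delta}{2}+\frac{\Delta+\eta}{1-p}+\frac{p}{(1-p)^2},\qquad D(\Delta)=\Delta-1+\frac{1}{1-p},
\end{align*}
a ratio of a convex quadratic in $\Delta$ (positive leading coefficient $\tfrac12$) to a positive increasing affine function on $\Delta\ge 1$.

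Next I would relax $\Delta$ to a real variable and locate the stationary point. Setting $\tfrac{d}{d\Delta}L_\eta=0$ gives the condition $N'(\Delta)D(\Delta)=N(\Delta)$, which after clearing denominators collapses (using $\tfrac{1}{(1-p)^2}(1-p)=\tfrac{1}{1-p}$ to cancel the awkward terms) to the quadratic $(1-p)\Delta^2+2p\,\Delta-(p+2\eta)=0$. Its unique positive root has discriminant $p+2\eta(1-p)$, and equals exactly $\frac{\sqrt{2\eta(1-p)+p}-p}{1-p}$, matching the expression in the statement. Since the numerator is convex and the denominator positive and affine, one checks that $L_\eta$ is quasiconvex (unimodal) on $\Delta\ge 1$: it decreases to this unique stationary point and then grows like $\Delta/2\to\infty$. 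Denote the resulting continuous minimizer by $\Delta^c_\eta$.

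Finally I would transfer from the continuous to the integer optimum. For a unimodal function the integer minimizer must be one of the two integers bracketing $\Delta^c_\eta$, which gives $\Delta^*_\eta\in\{\lfloor \Delta^c_\eta\rfloor,\lceil \Delta^c_\eta\rceil\}$ as claimed; equivalently, one can show the first difference $L_\eta(\Delta+1)-L_\eta(\Delta)$ changes sign exactly once. The main obstacle I anticipate is the bookkeeping in the renewal--reward step---correctly accounting for the age accrued during the geometric retransmission phase (the $\Exp{K(K-1)}$ term) and the off-by-one in where the idle phase ends---since an error there perturbs the constant terms of $N(\Delta)$ and hence the discriminant; a secondary point requiring care is rigorously establishing the quasiconvexity (unimodality) of $L_\eta$ so that the floor/ceil conclusion is valid.
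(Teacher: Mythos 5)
Your proposal is correct and takes essentially the same route as the paper's proof: both obtain the closed-form Lagrangian average cost $L_\eta(\Delta)=N(\Delta)/D(\Delta)$ of an integer-threshold policy (the paper via the stationary distribution of the induced chain, you via renewal--reward over cycles regenerating at $\delta=1$ --- equivalent computations that yield the identical $N(\Delta)=\frac{(\Delta-1)\Delta}{2}+\frac{\Delta+\eta}{1-p}+\frac{p}{(1-p)^2}$ and $D(\Delta)=\Delta-1+\frac{1}{1-p}$), then relax $\Delta$ to the reals, set the derivative to zero to get the same quadratic $(1-p)\Delta^2+2p\Delta-(p+2\eta)=0$ with root $\frac{\sqrt{2\eta(1-p)+p}-p}{1-p}$, and round to the bracketing integers. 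The only difference is cosmetic: the paper asserts convexity of $L_\eta$ in $\Delta$ (which does hold, since the numerator is positive at the denominator's root), while you invoke the slightly weaker quasiconvexity, which likewise justifies the floor/ceiling conclusion.
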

\begin{proof}
The proof is given in Appendix~\ref{AppendB}. 
\end{proof}

The main result of this section, given below, shows that the optimal policy for Problem~\ref{problem} is a randomized threshold policy which randomizes over the above two thresholds for the optimal value of $\eta^*$.
Let $\Delta_{C_{max} } \triangleq \frac{1/C_{max} -p}{1-p}$, $\Delta_1\triangleq\lfloor \Delta_{C_{max} } \rfloor$ and $\Delta_2\triangleq\lceil \Delta_{C_{max} } \rceil$, and consider the mixture of the threshold policies with thresholds  $\Delta_1$ and $\Delta_2$, respectively, and mixture coefficient $\mu \in [0,1]$.
The resulting policy $\pi^*_{C_{max},\mu}$ can be written in closed form:
if $\Delta_{C_{max} }$ is an integer then $\pi^*_{C_{max},\mu}(\delta)=\new$ if $\delta\ge \Delta_{C_{max} }$ and $\idle$ otherwise. If
$\Delta_{C_{max}}$ is not an integer, then $\pi^*_{C_{max},\mu}(\delta) = \new$  if  $\delta \ge \lceil \Delta_{C_{max} } \rceil$, 
$\pi^*_{C_{max},\mu}(\delta) = \idle$ if $\delta < \lfloor \Delta_{C_{max} } \rfloor$, while
$\pi^*_{C_{max},\mu}(\new|\delta) = \mu$ and $\pi^*_{C_{max},\mu}(\idle|\delta) = 1-\mu$ for $\delta =  \lfloor \Delta_{C_{max} } \rfloor$.
The mixture coefficient $\mu$ is selected so that $C^{\pi^*_{C_{max},\mu}} = C_{max}$:
From the proof of Lemma~\ref{thm1} one can easily deduce that the transmission cost (per time slot) of the threshold policy for any integer threshold $\Delta$ is given by
\begin{align}
C^{\Delta} = \frac{1}{\Delta (1-p)+p}.
\label{eq:const}
\end{align}
Hence, selecting  $\mu^* = \frac{C_{max} - C^{\Delta_2}}{C^{\Delta_1}-C^{\Delta_2}}$, as described in \eqref{eq:random}, ensures 
$C^{\pi^*_{C_{max},\mu^*}} = C_{max}$. Denoting $\pi^*_{C_{max}}=\pi^*_{C_{max},\mu^*}$, we obtain the following theorem (the proof is given in Appendix~\ref{AppendC}).

\begin{theorem}
\label{thm:ARQ}
For any $C_{max} \in (0,1]$, the stationary policy $\pi^*_{C_{max},\mu^*}$ defined above is an optimal policy (i.e., a solution of Problem~\ref{problem}) under the ARQ protocol.
\end{theorem}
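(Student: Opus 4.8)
The plan is to specialize the general structure result, Theorem~\ref{thm_mixture}, to the ARQ setting and then \emph{identify} the abstract optimal policy it guarantees with the explicit mixture $\pi^*_{C_{max},\mu^*}$. Theorem~\ref{thm_mixture} asserts the existence of an optimal policy for Problem~\ref{problem} that, for the optimal multiplier $\eta^*$, is a mixture of two deterministic $\eta^*$-optimal policies differing in at most one state, with the transmission constraint \eqref{eq:constraint} met with equality. The first step is therefore to invoke Lemma~\ref{lem:ARQ}: every deterministic policy satisfying the Bellman equations \eqref{eq:update2} is a threshold policy, so the two components guaranteed by Theorem~\ref{thm_mixture} must both be threshold policies, say with thresholds $\Delta$ and $\Delta'$.

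Next I would pin down these thresholds purely from the constraint. Two threshold policies disagree exactly on the states $\delta\in\{\Delta,\dots,\Delta'-1\}$ (for $\Delta<\Delta'$), so ``differing in at most one state'' forces either $\Delta'=\Delta$ (a single deterministic threshold) or $\Delta'=\Delta+1$. Using the closed-form transmission cost \eqref{eq:const}, $C^{\Delta}=1/(\Delta(1-p)+p)$, which is strictly decreasing in $\Delta$, the requirement that the mixture meet the constraint with equality gives $C^{\Delta+1}\le C_{max}\le C^{\Delta}$. Solving these inequalities for the integer $\Delta$ yields $\Delta\le\Delta_{C_{max}}$ and $\Delta+1\ge\Delta_{C_{max}}$, i.e.\ $\Delta=\lfloor\Delta_{C_{max}}\rfloor=\Delta_1$ and $\Delta+1=\lceil\Delta_{C_{max}}\rceil=\Delta_2$ (in the degenerate case where $\Delta_{C_{max}}$ is an integer the two thresholds coincide and the policy is deterministic). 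The mixing weight is then forced as well: the attainable (transmission-cost, age) pairs of the threshold policies lie on a convex Pareto frontier, so the constrained optimum is the convex combination of the two adjacent vertices $(C^{\Delta_1},J^{\Delta_1})$ and $(C^{\Delta_2},J^{\Delta_2})$ straddling $C_{max}$, and the weight placing the transmission cost exactly at $C_{max}$ is precisely $\mu^*$ of \eqref{eq:random}. This identifies the policy furnished by Theorem~\ref{thm_mixture} with $\pi^*_{C_{max},\mu^*}$, so the latter is optimal.

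The main obstacle I anticipate is making the ``mixture'' step rigorous so that the realized transmission cost is genuinely $C_{max}$. The correct object is the convex combination of the two occupation measures, for which the costs in \eqref{eq:cost}--\eqref{eq:constraint} are exactly linear; this convex combination is realized by a stationary randomized policy which, by the regenerative argument at the recurrent state $(1,0)$ discussed at the end of Section~\ref{sec:structure}, can be implemented by choosing between the two threshold policies at each regeneration. Care is needed here because randomizing the \emph{action} at the single disagreement state $\delta=\Delta_1$ with a naive weight does not in general reproduce the linear combination of the two individual costs; the weight $\mu^*$ must be interpreted as the occupation-measure (equivalently, per-regeneration) mixing coefficient, which is exactly the quantity that \eqref{eq:random} and \eqref{eq:const} compute. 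A secondary, lighter task is to confirm that the two thresholds $\Delta_1,\Delta_2$ selected above are indeed simultaneously $\eta^*$-optimal for one multiplier $\eta^*$; this is inherited from Theorem~\ref{thm_mixture}, but can alternatively be checked directly from the closed form of Lemma~\ref{thm1} by exhibiting the value of $\eta^*$ at which $\bigl(\sqrt{2\eta^*(1-p)+p}-p\bigr)/(1-p)$ lies in $[\Delta_1,\Delta_2]$ with both its floor and ceiling attaining the minimum in \eqref{eq:opt_eta}.
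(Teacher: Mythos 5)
Your first three steps are sound and take a genuinely different (and arguably cleaner) route than the paper: you specialize Theorem~\ref{thm_mixture} directly, use Lemma~\ref{lem:ARQ} to force both components to be threshold policies, use ``differ in at most one state'' to make the thresholds adjacent, and pin them down from monotonicity of \eqref{eq:const} together with constraint equality. The paper never invokes Theorem~\ref{thm_mixture} in Appendix~\ref{AppendC}; it reconstructs the two policies as limits of $\pi^*_{\eta_n}$ along $\eta_n \downarrow \eta^*$ via Lemmas~3.4, 3.7, 3.9 and 3.10 of Sennott, and identifies the thresholds as $\Delta_1,\Delta_2$ through the convexity of $J^\Delta$ viewed as a function of $C^\Delta$ rather than through your inequality $C^{\Delta+1}\le C_{max}\le C^{\Delta}$. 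Up to the conclusion ``an optimal policy is the $\mu^*$-mixture, in the occupation-measure / randomize-once sense, of the threshold policies $\Delta_1$ and $\Delta_2$,'' both arguments work.

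The gap is in your final identification step, and your proposed repair of it is wrong. You correctly sense that randomizing the \emph{action} at $\delta=\Delta_1$ with weight $\mu^*$ does not reproduce the linear combination of costs, but your fix---declaring the occupation-measure and per-regeneration mixing coefficients ``equivalent''---is false. Choosing threshold policy $\Delta_1$ with probability $\mu$ at each regeneration yields, by renewal--reward, the occupation-measure weight $\mu\Exp{T_1}/\bigl(\mu\Exp{T_1}+(1-\mu)\Exp{T_2}\bigr)$, where $\Exp{T_i}=\Delta_i-1+\tfrac{1}{1-p}$ are the expected cycle lengths; since these differ by one, per-regeneration weight $\mu^*$ does not give occupation weight $\mu^*$. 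Equivalently, the stationary policy that transmits with probability $\mu$ in state $\Delta_1$ (and always for $\delta\ge\Delta_2$) has transmission cost $1/\bigl((\Delta_1+1-\mu)(1-p)+p\bigr)$, which equals $C_{max}$ if and only if $\mu=\lceil\Delta_{C_{max}}\rceil-\Delta_{C_{max}}$, a quantity generally different from $\mu^*$ of \eqref{eq:random}. A concrete check with $p=0$, $C_{max}=3/4$ (so $\Delta_1=1$, $\Delta_2=2$, $\mu^*=1/2$): action-randomization with weight $1/2$ gives $(C,J)=(2/3,4/3)$, which is feasible but strictly worse than the optimum $(3/4,5/4)$ attained with action weight $2/3$. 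So the policy $\pi^*_{C_{max},\mu^*}$ \emph{as literally defined} (action probability $\mu^*$) is not recovered by your argument; what you prove is optimality of the occupation-measure mixture, which is a corrected variant of the statement. To be fair, the paper's own proof stumbles at the very same point: it asserts that randomizing between the two threshold policies at every visit to $(1,0)$ has ``the same expected AoI and transmission cost'' as randomizing once at time zero, which fails precisely because the two cycle lengths differ. The honest repair is either to define the optimal policy by a single initial randomization with weight $\mu^*$, or to keep action-level randomization at $\delta=\Delta_1$ but with weight $\lceil\Delta_{C_{max}}\rceil-\Delta_{C_{max}}$, not to identify the two mixing notions.
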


Numerical results obtained for the above algorithm will be presented and compared with  those from the HARQ protocol in Section~\ref{sec:results}.

\section{Learning to minimize AoI in an unknown environment}
\label{sec:learning}

\begin{algorithm}[t]
\begin{small}
\label{algo_learn}
\DontPrintSemicolon
\SetAlgoLined
\SetKwInOut{Input}{Input}\SetKwInOut{Output}{Output}
\Input{Lagrange parameter $\eta$ \tcc{error probability $g(r)$ is unknown}} 

$n\leftarrow 0~$    \tcc{time iteration}
$\tau\leftarrow1$     \tcc{softmax temperature parameter}  
$Q_{\eta}^{N\times M\times 3}\leftarrow 0~$   \tcc{initialization of $Q$} 
$L_{\eta}\leftarrow 0~$   \tcc{initialization of the gain} 
\BlankLine
\For{$n$}{    
    \BlankLine  
     \textsc{observe} the current state $s_n$ \\
    \For{$a \in \mathcal{A}$}{   
    
    \tcc{since it is a minimization problem, use minus $Q$ function in softmax} 
    
    $\pi(a|s_n)=\frac{\displaystyle\exp(-Q_{\eta}(s_n,a)/\tau)}{\displaystyle\sum_{a'\in\mathcal{A}}{\exp(-Q_{\eta}(s_n,a')/\tau)}}$ 
    }    
    \textsc{Sample} $a_n$ from $\pi(a|S_n)$ \;
    \textsc{observe} the next state $s_{n+1}$ and cost $c_n=\delta_n+\eta 1_{\{a_n=1,2\}}$
    
	\For{$a \in \mathcal{A}$}{    
	
	\tcc{softmax is also used for the next state $s_{n+1}$, so that it is on-policy }
    
    $\pi(a|s_{n+1})=\frac{\displaystyle\exp(-Q_{\eta}(s_{n+1},a_{n+1})/\tau)}{\displaystyle\sum_{a'_{n+1}\in\mathcal{A}}{\exp(-Q_{\eta}(s_{n+1},a'_{n+1})/\tau)}}$ 
    
    }    
	
	\textsc{Sample} $a_{n+1}$ from $\pi(a_{n+1}|s_{n+1})$\;
    
    \textsc{Update} 
    
    $\alpha_n\leftarrow1/\sqrt{n}$  ~\tcc{update parameter}

    $Q_{\eta}(s_n,a_n)\leftarrow Q_{\eta}(s_n,a_n) + \alpha_n [\delta+\eta\cdot\mathbbm{1}[a_n\neq \idle]-J_{\eta}+Q_{\eta}(s_{n+1},a_{n+1})-Q_{\eta}(s_n,a_n)] $
    
            $L_{\eta}\leftarrow L_{\eta}+ 1/n [\delta+\eta\cdot\mathbbm{1}[a_n\neq \idle]-J_{\eta}]$ ~\tcc{update $J_{\eta}$ at every step} 
     
       $n\leftarrow n+1$ ~\tcc{increase the iteration} 
 
}
\caption{Average-cost SARSA with softmax}
\end{small}
\end{algorithm}

In the CMDP formulation presented in Sections~\ref{sec:primal} and~\ref{sec:arq}, we have assumed that the channel error probabilities for all retransmissions are known in advance. However, in most practical scenarios, these error probabilities may not be known at the time of deployment, or may change over time. Therefore, in this section, we assume that the source node does not have \textit{a priori} information about the decoding error probabilities, and has to learn them. We employ an online learning algorithm to learn $g(r)$ over time without degrading the performance significantly. 

The literature for average-cost RL is quite limited compared to discounted cost problems \cite{Mahadevan1996}, \cite{Sutton1998}. SARSA \cite{Sutton1998} is a well-known RL algorithm, originally proposed for discounted MDPs, that learns the optimal policy for an MDP based on the action performed by the current policy in a recursive manner.  For average AoI minimization in Problem~\ref{problem}, an average cost version of the SARSA algorithm is employed with \emph{Boltzmann}  (\emph{softmax}) exploration. The resulting algorithm is called \emph{average-cost SARSA with softmax}. 

As indicated by \eqref{eq:Bellman} and \eqref{eq:Bellman2} in Section \ref{sec:structure}, $Q_{\eta}(s_n,a_n)$ of the current state-action pair can be represented in terms of the immediate cost of the current state-action pair and the differential state-value function $h_{\eta}(s_{n+1})$ of the next state. Notice that, one can select the optimal actions by only knowing $Q_{\eta}(s,a)$ and choosing the action that will give the minimum expected cost as in \eqref{eq:opt_eta}.  Thus, by only knowing $Q_{\eta}(s,a)$, one can find the optimal policy $\pi^*$ without knowing the transition probabilities $\Pt$ characterized by $g(r)$ in \eqref{eq:transitions}. 

Similarly to SARSA, \emph{average-cost SARSA with softmax} starts with an initial estimation of $Q_{\eta}(s,a)$ and finds the optimal policy by estimating state-action values in a recursive manner. In the $n^{th}$ time iteration, after taking action $a_n$, the source observes the next state $s_{n+1}$, and the instantaneous cost value $c_n$. Based on this, the estimate of $Q_{\eta}(s,a)$ is updated by weighing the previous estimate and the estimated expected value of the current policy in the next state $s_{n+1}$.  Also note that, in general, $c_n$ is not necessarily known before taking action $a_n$ because it does not know the next state $s_{n+1}$ in advance. In our problem, the instantaneous cost $c_n$ is the sum of AoI at the destination and the cost of transmission, i.e. $\delta_n+\eta\cdot\mathbbm{1}[a_n\neq \idle]$; hence, it is readily known at the source node.

In each time slot, the learning algorithm 
\begin{itemize}
\item observes the current state $s_n \in \mathcal{S}$,
\item selects and performs an action $a_n \in \mathcal{A}$,
\item observes the next state $s_{n+1}\in \mathcal{S}$ and the instantaneous cost $c_n$,
\item updates its estimate of $Q_{\eta}(s_n,a_n)$ using the current estimate of $_{\eta}$ by
\begin{equation}
Q_{\eta}(s_n,a_n)\leftarrow Q_{\eta}(s_n,a_n) + \alpha_n [\delta+\eta\cdot\mathbbm{1}[a_n\neq \idle]-L_{\eta}+Q_{\eta}(s_{n+1},a_{n+1})-Q_{\eta}(s_n,a_n)],
\end{equation}
where $\alpha_n$ is the update parameter (learning rate) in the $n^{th}$ iteration.
\item updates its estimate of $L_{\eta}$ based on empirical average.
\end{itemize}
The details of the algorithm are given in Algorithm \ref{algo_learn}. 
We update the gain $L_\eta$ at every time slot based on the empirical average, instead of updating it at non-explored time slots. 

As we discussed earlier, with the accurate estimate of $Q_{\eta}(s,a)$ at hand the transmitter can decide for the optimal actions for a given $\eta$ as in \eqref{eq:opt_eta}. However, until the state-action cost function is accurately estimated, the transmitter action selection method should balance the \textit{exploration} of new actions with the \textit{exploitation} of actions known to perform well. In particular, the \textit{Boltzmann} action selection method, which chooses each action probabilistically relative to expected costs, is used in this paper. The source
assigns a probability to each action for a given state $s_n$, denoted by $\pi(a|s_n)$:
\begin{equation}
\pi(a|s_n)\triangleq \frac{\displaystyle\exp(-Q_{\eta}(s_n,a)/\tau)}{\displaystyle\sum_{a'\in\mathcal{A}}{\exp(-Q_{\eta}(s_n,a')/\tau)}},
\end{equation}
where $\tau$ is called the temperature parameter such that high $\tau$ corresponds to more uniform action selection (exploration) whereas low $\tau$ is biased toward the best action (exploitation).

In addition, the constrained structure of the average AoI problem requires additional modifications to the algorithm, which is achieved in this paper by updating the Lagrange multiplier according to the empirical resource consumption.  In each time slot, we keep track of a value $\eta$ resulting in a transmission cost close to $C_{max}$, and then find and apply a policy that is optimal (given the observations so far) for the MDP with Lagrangian cost as in Algorithm~\ref{algo_learn}. 

The performance of \emph{average-cost SARSA with softmax}, and its comparison with the RVI algorithm will be presented in the next section.

\section{Numerical Results}
\label{sec:results}

In this section, we provide numerical results for all the proposed algorithms, and compare the achieved average performances. For the simulations employing HARQ, 
motivated by previous research on HARQ \cite{hybrid2001}, \cite{harq2003}, \cite{Lagrange2010}, we assume that decoding error reduces exponentially with the number of retransmission, that is, $g(r)\triangleq p_0 \lambda^{r}$ for some $\lambda \in (0,1)$, where $p_0$ denotes the error probability of the first transmission, and $r$ is the retransmission count (set to $0$ for the first transmission). 
The exact value of the rate $\lambda$ depends on the particular HARQ protocol and the channel model. Note that ARQ corresponds to the case with $\lambda=1$ and $r_{max}= 0$. Following the \emph{IEEE 802.16} standard\cite{IEEEstandard}, the maximum number of retransmissions is set to $r_{max}=3$; however, we will present results for other $r_{max}$ values as well. We note that we have also run simulations for HARQ with relatively higher $r_{max}$ values and $r_{max}=\infty$, and the improvement on the performance is not observable beyond $r_{max}=3$. Numerical results for different $p_0$, $\lambda$ and  $C_{max}$ values, corresponding to different channel conditions and HARQ schemes, will also be provided.

Figure~\ref{fig:update} illustrates the deterministic policies obtained by RVI and the search for $\eta^*$ for given  $C_{max}$ and $p_0$ values, while $\lambda$ is set to $0.5$. The final policies are generated by randomizing between  $\pi^*_{\eta^*-\xi}$ and $\pi^*_{\eta^*+\xi}$; the approximate $\eta^*$ values found for the settings in Figures~\ref{fig:first} and~\ref{fig:second} are $5$ and $19$, respectively, and $\xi$ is set to $0.2$. As it can  be seen from the figures, the resulting policy transmits less as the average cost constraint becomes more limiting, i.e., as $\eta$ increases. We also note that, although the policies $\pi^*_{\eta^*-\xi}$ and $\pi^*_{\eta^*+\xi}$ are obtained for similar $\eta^*$ values, and hence, have similar average number of transmissions, they may act quite differently especially for large $C_{max}$ values.

\begin{figure}
\centering
\subfigure[$C_{max}=0.4$, $p_0=0.3$]{%
\label{fig:first}%
\includegraphics[scale=0.5]{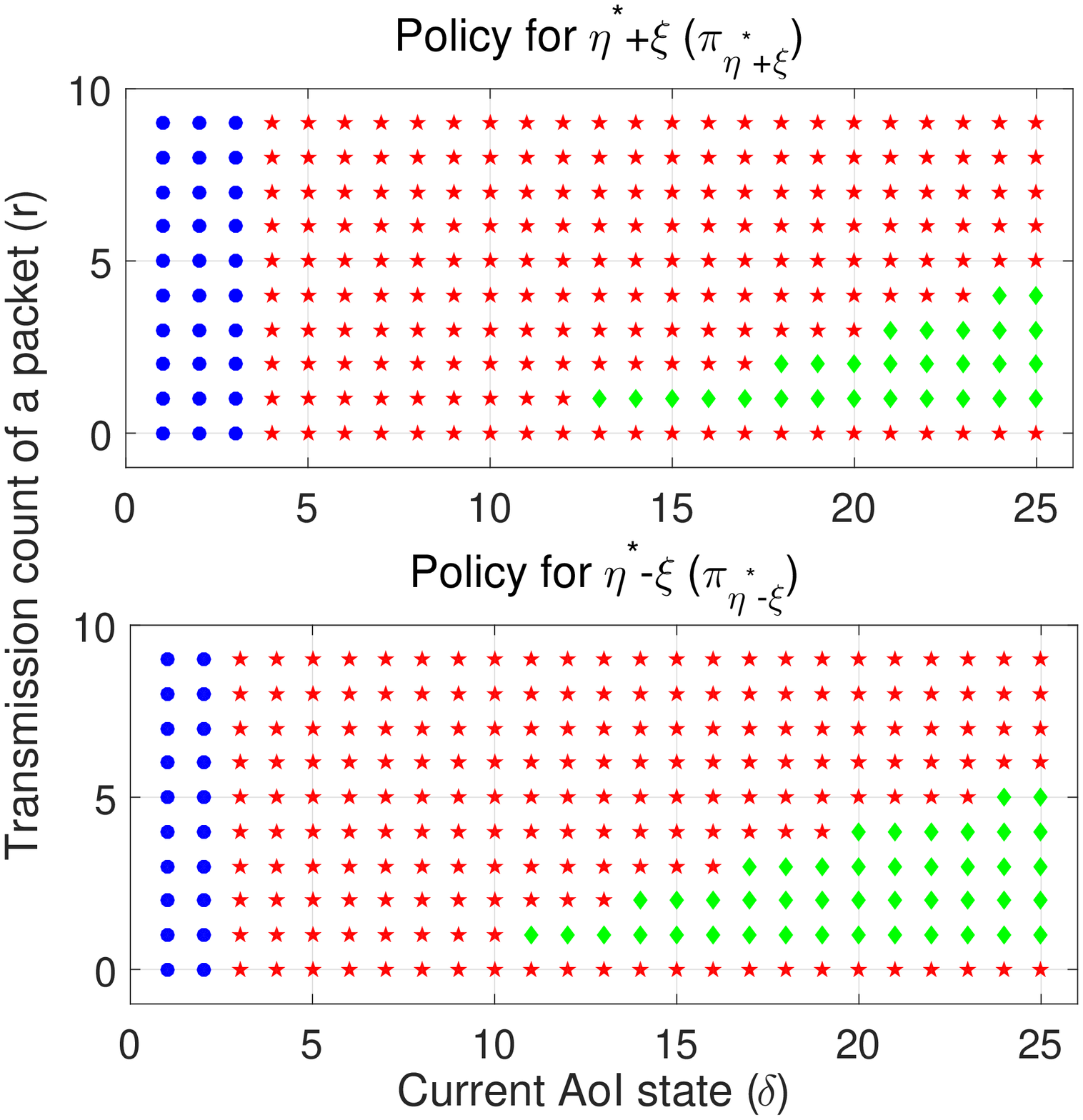}}%
\subfigure{\raisebox{40mm}{
\includegraphics[scale=0.2]{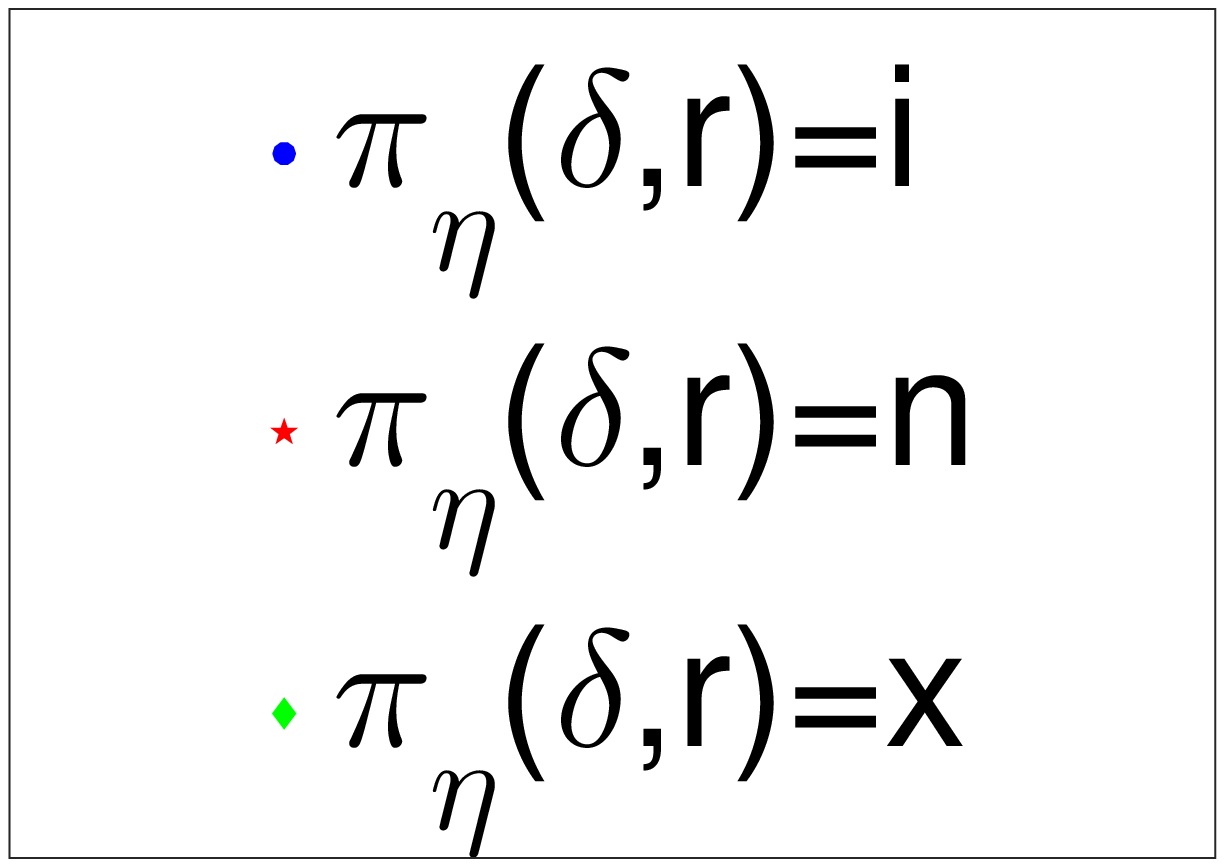}}}%
\qquad
\renewcommand{\thesubfigure}{(b)}
\subfigure[$C_{max}=0.2$, $p_0=0.4$]{%
\label{fig:second}%
\includegraphics[scale=0.5]{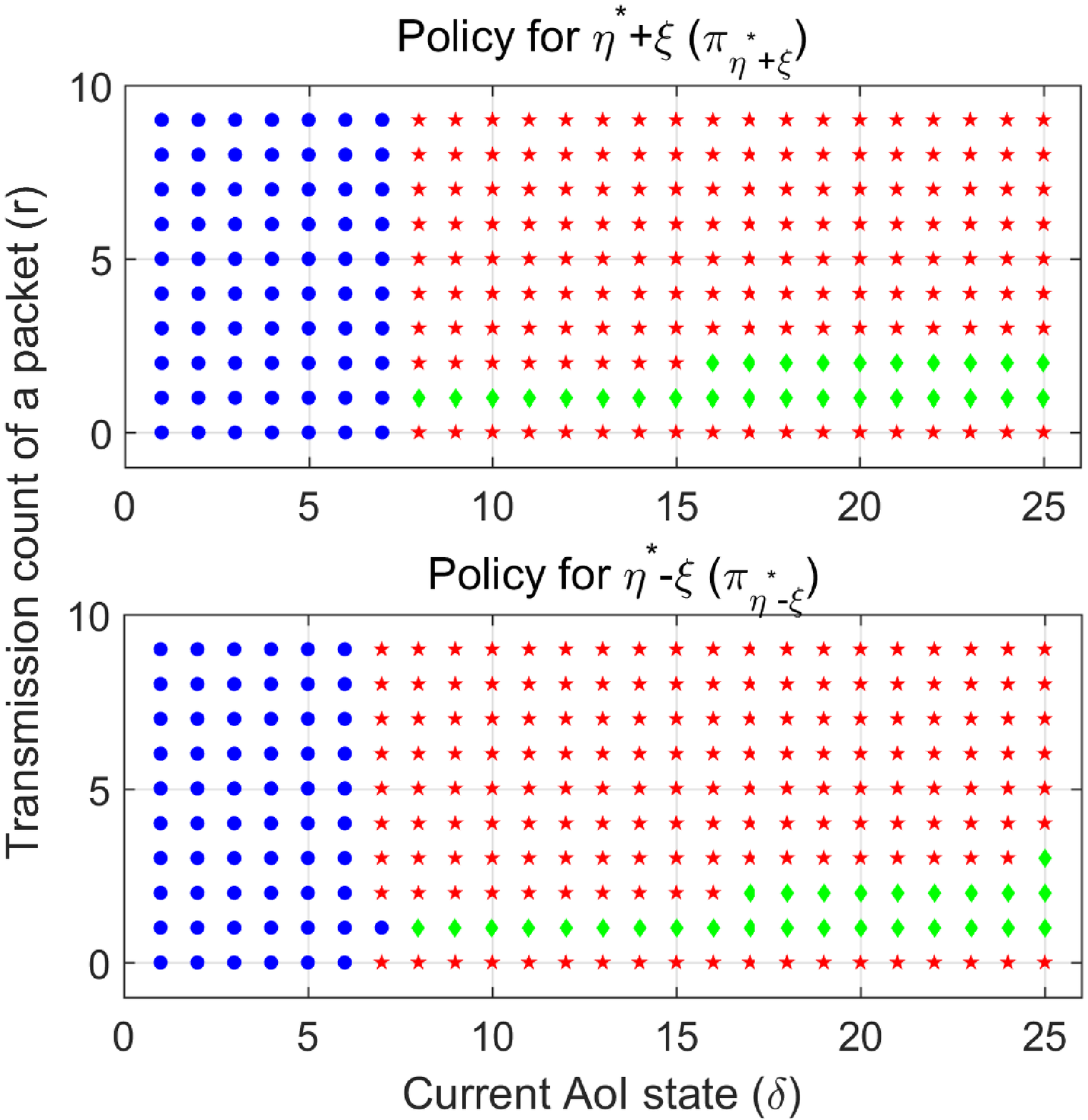}}%
\subfigure{\raisebox{40mm}{
\includegraphics[scale=0.2]{legenddd2}}}%
\caption{Deterministic policies $\pi_{\eta^*+\xi}$ (top) and $\pi_{\eta^*-\xi}$ (bottom)  when  $\lambda=0.5$ and $r_{max}=9$. (Blue circles, red stars, and green diamonds represent actions $\pi_{\eta}(\delta,r)= \idle$, $\new$ and $\retx$, respectively.)}
\label{fig:update}
\end{figure}

Figure~\ref{fig:detervsrandom_hybrid} illustrates the performance of the proposed randomized HARQ policy with respect to $C_{max}$ for different $p_0$ values when $\lambda$ is set to $0.5$. We also include the performance of the optimal deterministic and randomized threshold policies with ARQ, derived in Section~\ref{sec:arq}, for $p_0 = 0.5$. For baseline, we use a simple no-feedback policy that periodically transmits a fresh status update with a period of $\lceil{1/C_{max}}\rceil$, ensuring that the constraint on the average number of transmissions holds. The effect of feedback on the performance can be seen immediately: a single-bit ACK/NACK feedback, even with the ARQ protocol, decreases the average AoI considerably, although receiving feedback might be costly for some status update systems. The two curves for the ARQ policies demonstrate the effect of randomization: the curve corresponding to the randomized policy is the lower convex hull of the piecewise constant AoI curve for deterministic policies. For the same $p_0=0.5$, HARQ with $\lambda=0.5$ improves only slightly over ARQ. Smaller $p_0$ results in a decrease in the average AoI as expected, and the gap between the AoIs for different $p_0$ values is almost constant for different $C_{max}$ values. 

More significant gains can be achieved from HARQ when the error probability decreases faster with retransmissions (i.e., small $\lambda$), or more retransmissions are allowed. This is shown in Figure~\ref{fig:arq_vs_harq}. On the other hand, the effect of retransmissions on the average AoI (with respect to ARQ) is more pronounced when $p_0$ is high and $\lambda$ is low.

\begin{figure}
\centering
\includegraphics[scale=0.75]{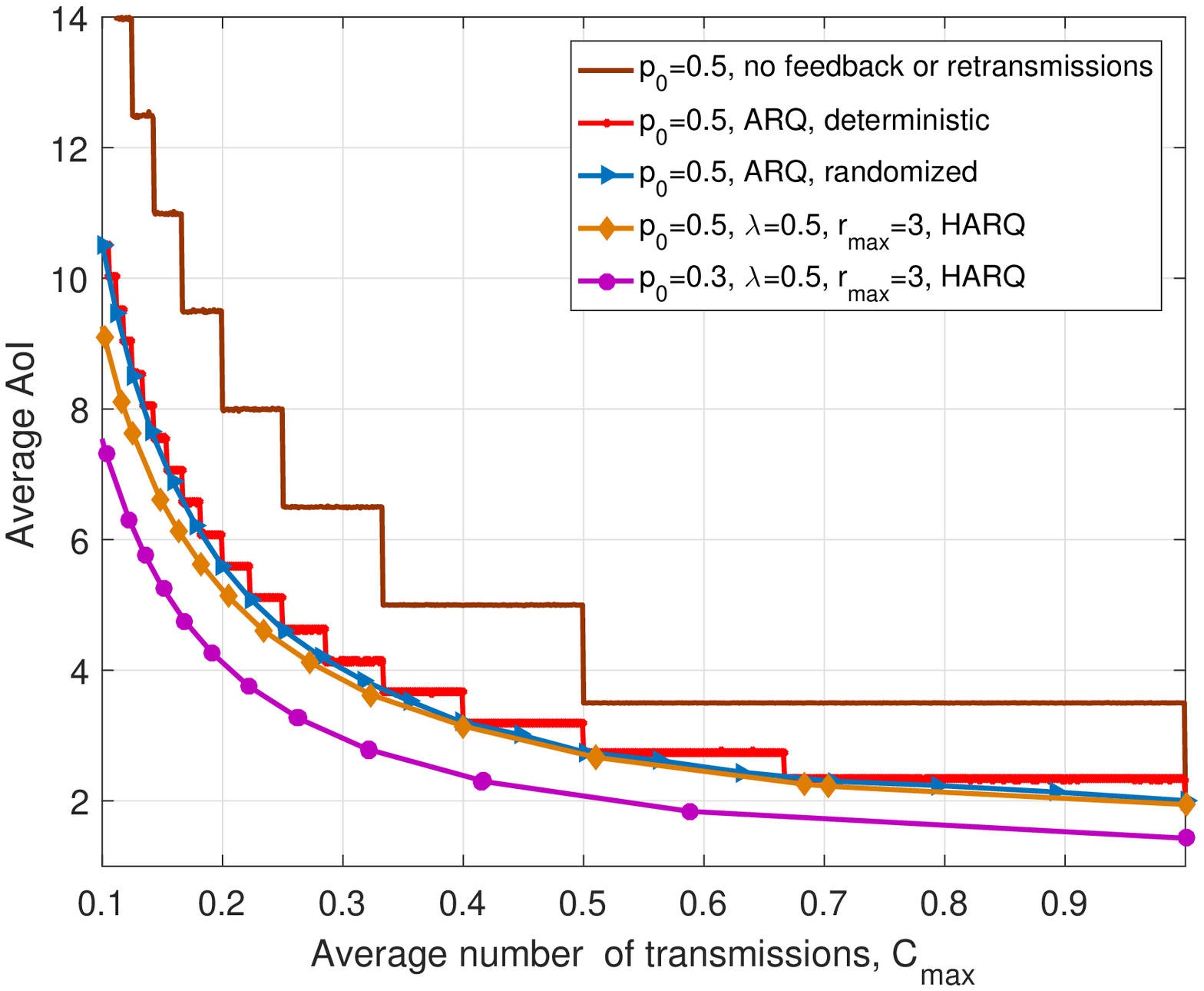}
\caption{Expected average AoI as a function of $C_{max}$  for ARQ and HARQ protocols for different $p_0$ values. Time horizon is set to $T=10000$, and the results are averaged over $1000$ runs.}
\label{fig:detervsrandom_hybrid}
\end{figure}

\begin{figure}
\centering
\includegraphics[scale=0.75]{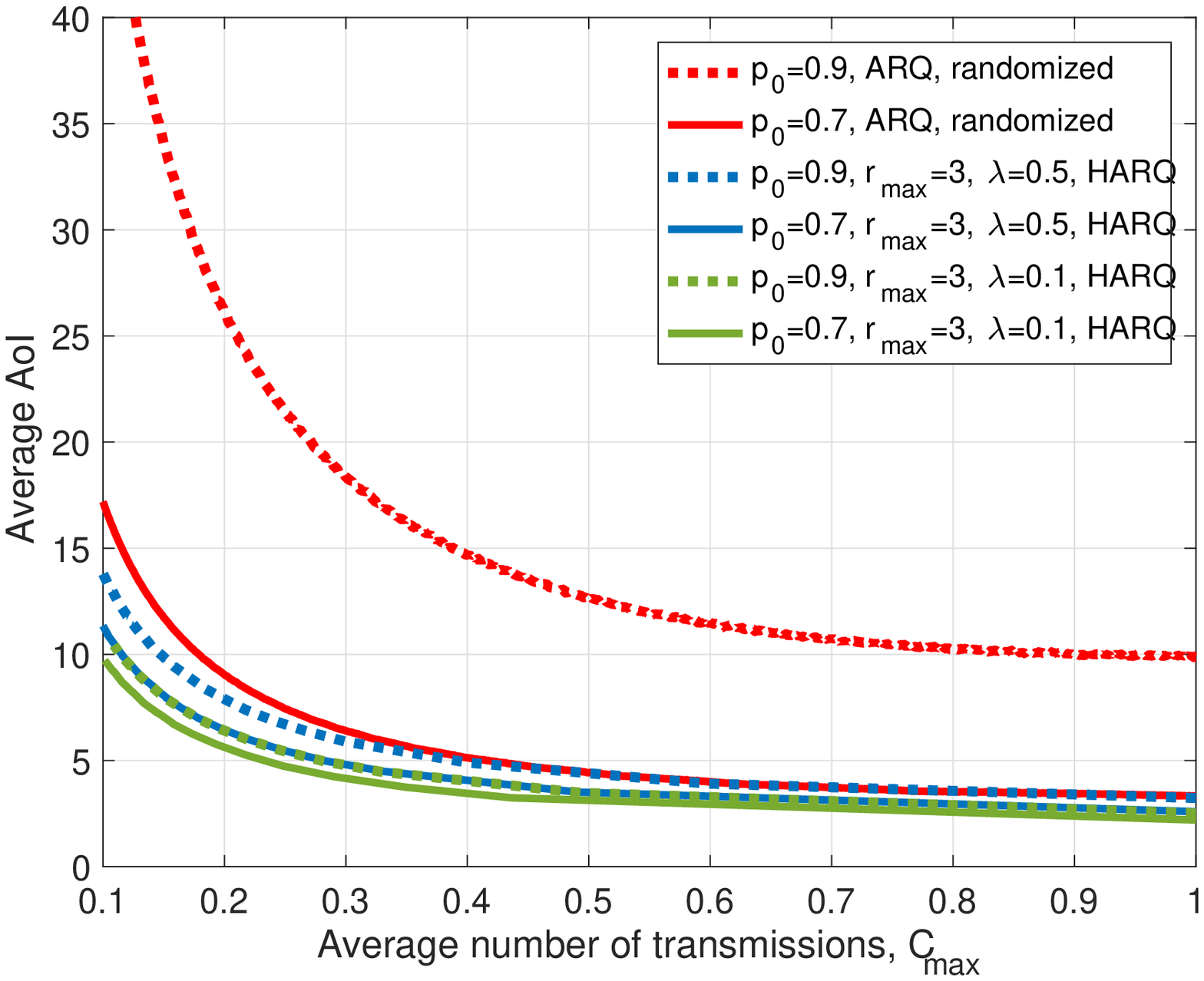}
\caption{Expected average AoI with respect to $C_{max}$  for ARQ and HARQ protocols for different $p_0$ and $r_{max}$ values. Time horizon is set to $T=10000$, and the results are averaged over $1000$ runs.}
\label{fig:arq_vs_harq}
\end{figure}


Figure~\ref{fig:wrt_x} shows the average AoI achieved by the HARQ protocol with respect to different $p_0$ and $\lambda$ values for $r_{max}=3$. 
Similarly to Figure~\ref{fig:detervsrandom_hybrid}, the gap between the average AoI values is higher for unreliable environments with higher error probability, and the performance gap due to different $\lambda$ values are not observable for relatively reliable environments, for example, when $p_0=0.3$. The performance difference for different $\lambda$ values (with a fixed $p_0$) is more pronounced when the average number of transmissions, $C_{max}$, is low, since then less resources are available to correct an unsuccessful transmission.

\begin{figure}
\centering
\includegraphics[scale=0.75]{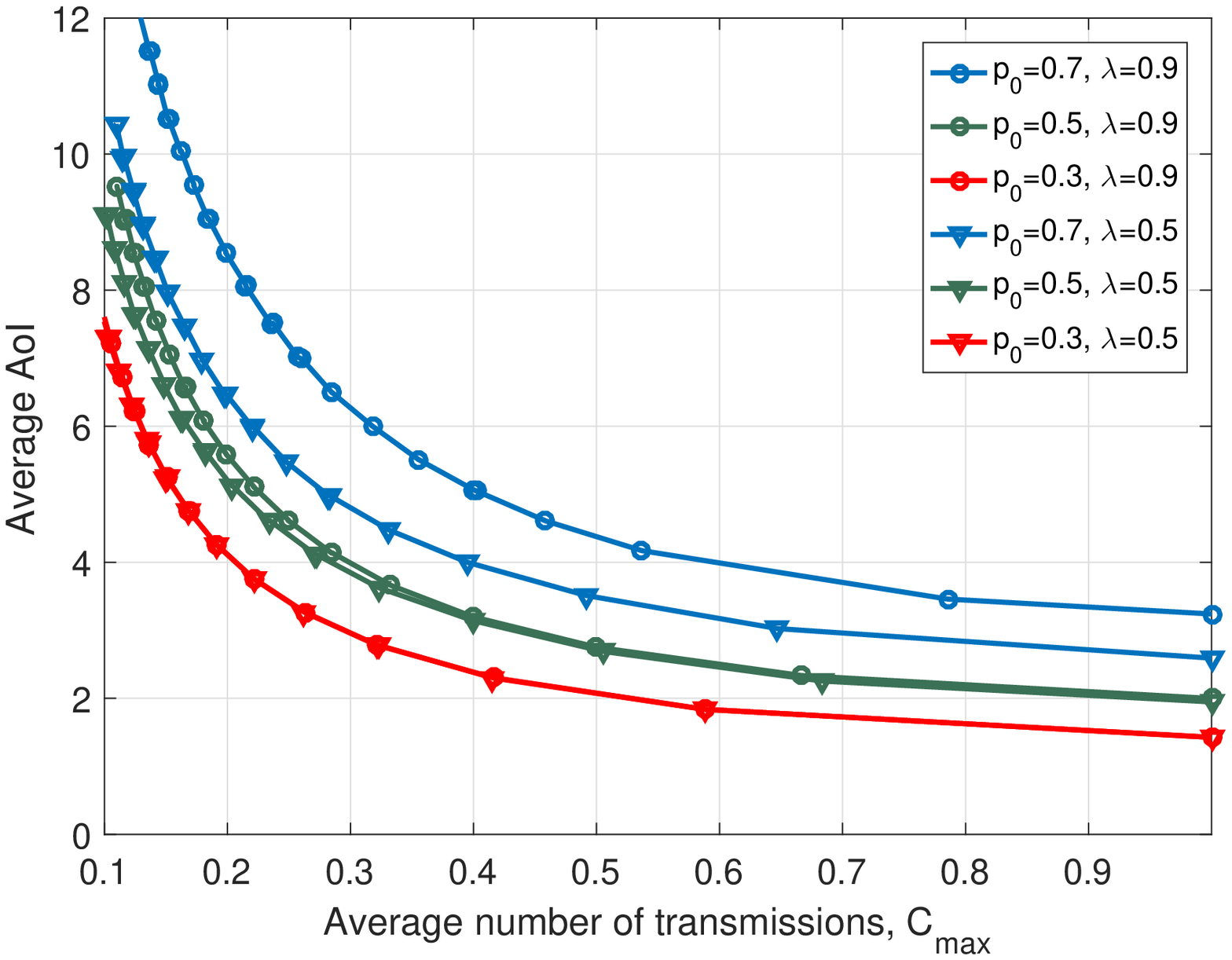}
\caption{Expected average AoI with respect to $C_{max}$  for HARQ protocols with different $g(r)=p_0 \lambda^{r}$ values corresponding to different $p_0$ and $\lambda$ values with  $r_{max}=3$. The time horizon is set to $T=10000$, and the results are averaged over $1000$ runs.}
\label{fig:wrt_x}
\end{figure}

Figure~\ref{fig:learn1} shows the evolution of the average AoI over time when the average-cost SARSA learning algorithm is employed. It can  be observed that the average AoI achieved by Algorithm~\ref{algo_learn}, denoted by \emph{RL} in the figure, converges to the one obtained from the RVI algorithm which has \emph{a priori} knowledge of $g(r)$. We can observe from Figure~\ref{fig:learn1} that the performance of SARSA achieves that of RVI in about $10000$ iterations. Figure~\ref{fig:learn2} shows the performance of the two algorithms (with again $10000$ iterations in SARSA) as a function of $C_{max}$ in two different setups. We can see that SARSA performs very close to RVI with a gap that is more or less constant for the whole range of $C_{max}$ values. We can also observe that the variance of the average AoI achieved by SARSA is much larger when the number of transmissions is limited, which also limits the algorithm's learning capability. 
\begin{figure}
\centering
\includegraphics[scale=0.75]{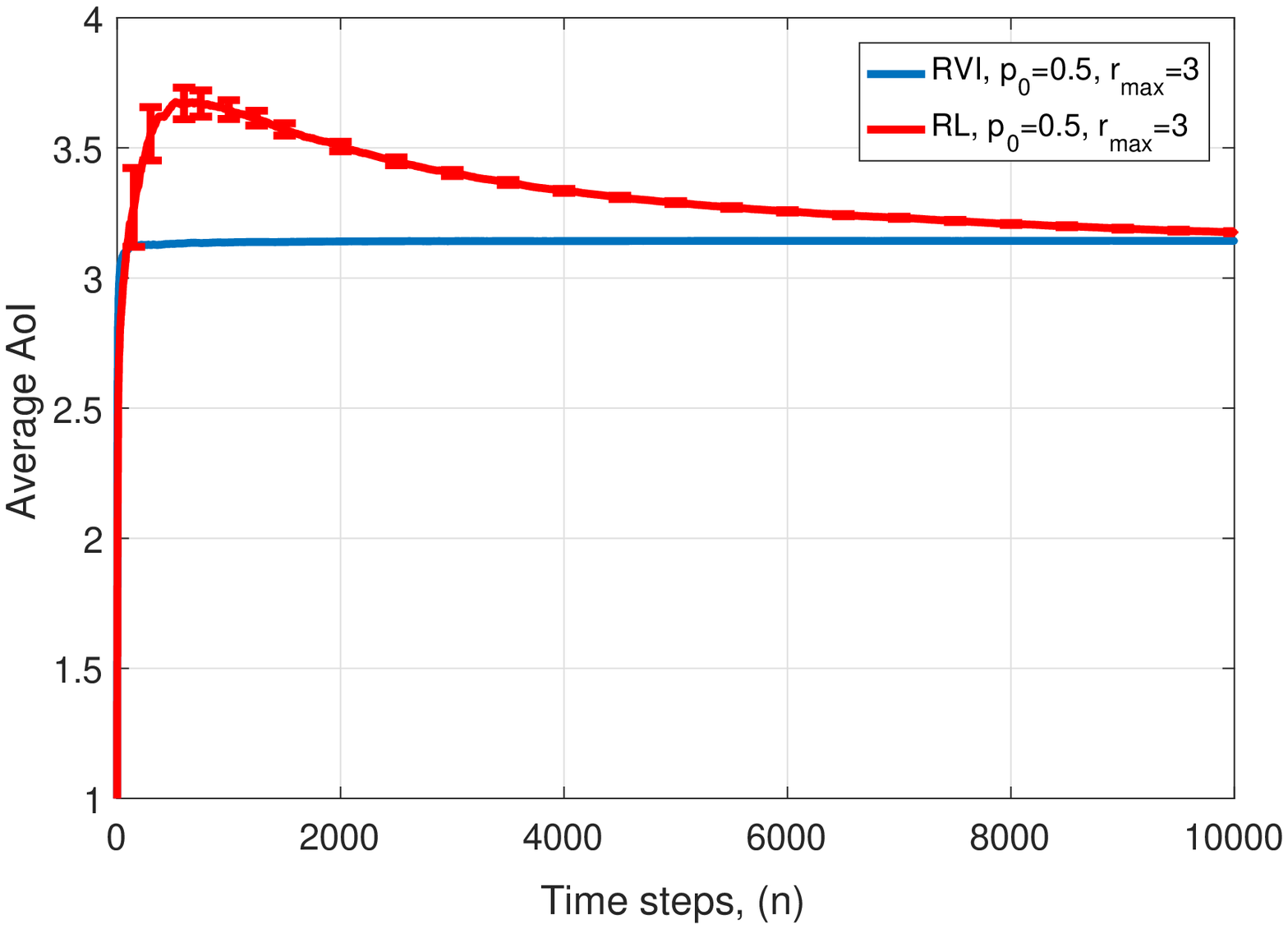}
\caption{Performance of the average-cost SARSA for $r_{max}=3$, $p_0=0.5$, $\lambda=0.5$, $C_{max}=0.4$ and $n= 10000$, averaged over 1000 runs (both the mean and the variance are shown).}
\label{fig:learn1}
\end{figure}
\begin{figure}
\centering
\includegraphics[scale=0.75]{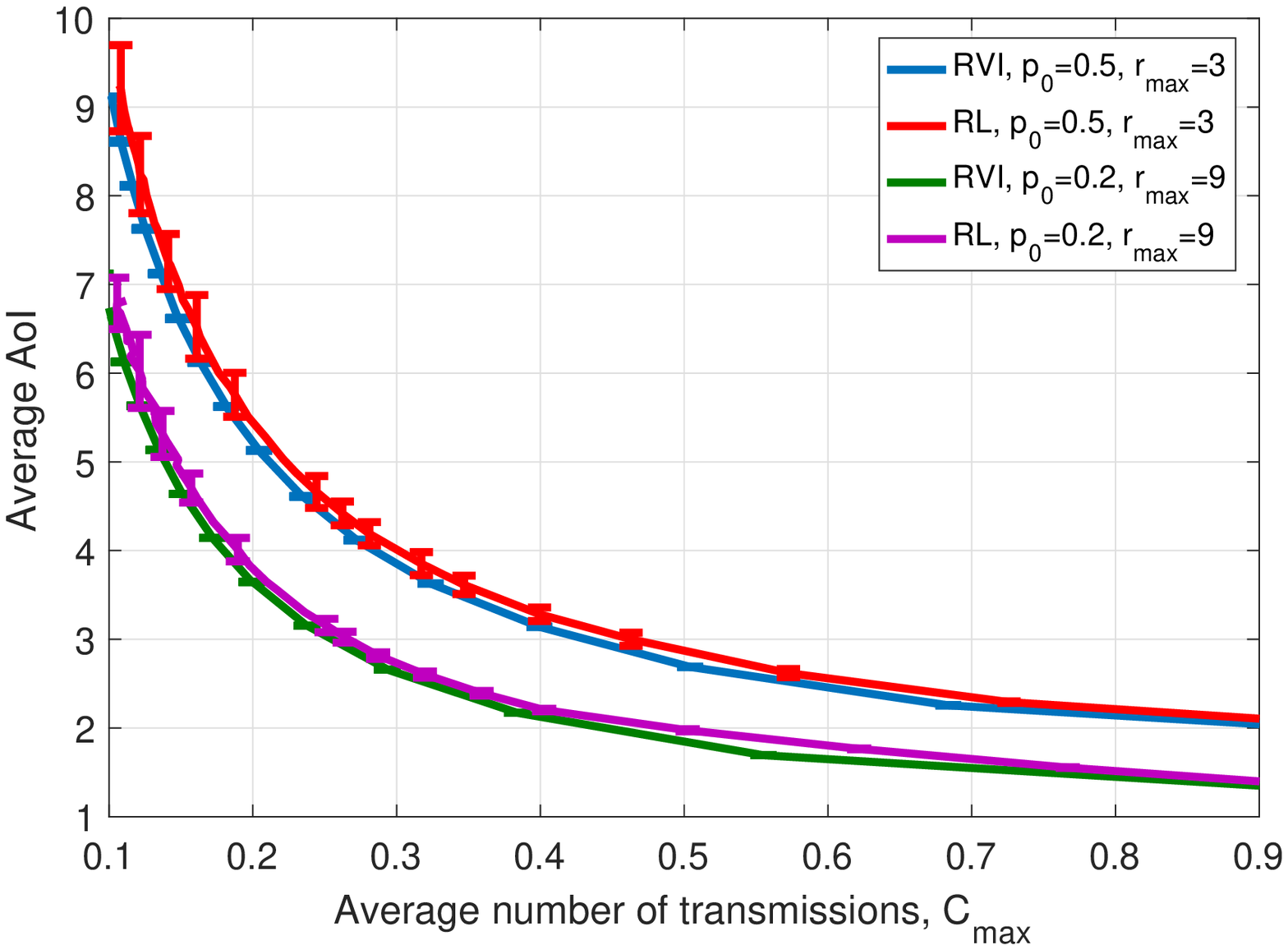}
\caption{Performance of the proposed RL algorithm (average-cost SARSA) and its comparison with the RVI algorithm for $n=10000$ iterations, and values are averaged over 1000 runs for different $p_0$ and $r_{max}$ values when $\lambda=0.5$ (both the mean and the variance are shown).}
\label{fig:learn2}
\end{figure}

\section{Conclusions}\label{sec:conclusion}

We have considered a communication system transmitting time-sensitive data over an imperfect channel with the average AoI as the performance measure, which quantifies the timeliness of the data available at the receiver. Considering both the classical ARQ and the HARQ protocols, preemptive scheduling policies have been proposed by taking into account retransmissions under a resource constraint.  In addition to identifying a randomized threshold structure for the optimal policy when the error probabilities are known, an efficient RL algorithm is also presented for practical applications when the system characteristics may not be known in advance. The effects of feedback and the HARQ structure on the average AoI are demonstrated through numerical simulations. The algorithms adopted in this paper are also relevant to different  systems concerning the timeliness of information, and the proposed methodology can be used in other CMDP problems. As future work, the problem will be extended to time-correlated channel statistics in a multi-user setting.

\appendix




\subsection{Verifying the assumptions of \cite{sennott_1993}} 
\label{Append_assumptions}

In this section, we show that the assumptions for the main results of \cite{sennott_1993} are satisfied for Problem~\ref{problem}. We start with a few standard definitions about Markov chains:
In a Markov chain with a countable state space $\mathcal{S}$, a state $s \in \mathcal{S}$ is called \emph{positive recurrent} if the expected number of transitions needed to return to state $s$ given that the chain started in state $s$ is finite. A communication class $Z \subset \mathcal{S}$ is defined as a subset of the state space $\mathcal{S}$ such that all states within it communicate; that is, for any $s,s' \in Z$, starting from state $s$ the chain reaches state $s'$ with some positive probability.  A communication class is positive recurrent if and only if all states in a communication class are positive recurrent. 
\cite{Ross2006}

We continue with Definition~2.3 of \cite{sennott_1993}: Let $G \subset \mathcal{S}$ be a nonempty set of states of a CMDP. Given a state $s \in \mathcal{S}$, let $\mathcal{R}(s, G)$ be the class
of policies such that $P^{\pi}(s_t \in G \textrm{ for some } t\geq 1 ~| s_0 = s) = 1$ and the expected
time $m_{s,G}(\pi)$ of the first passage from $s$ to $G$ under $\pi$ is finite. Let $\mathcal{R}^*(s, G)$ be the class of policies $\pi \in \mathcal{R}(s, G)$ such that, in addition, the expected average AoI  $c_{s,G}(\pi)$ and the expected transmission cost $d_{s,G}(\pi)$ of a first passage from $s$ to $G$ are finite.

\begin{prop}
The following hold for Problem~\ref{problem}:
\begin{enumerate}[(i)]
\item For all $b>0$, the set $G(b)\triangleq \{s|\textrm{ there exists an action a such that } c(s,a)+d(s,a)\leq b\}$ is finite (Assumption~1 of \cite{sennott_1993}).
\label{it:i}
\item There exists a deterministic policy $\pi$ that induces a Markov chain with the following properties: the state space $\mathcal{S}^\pi$ consists of a single (nonempty)
positive recurrent class $R$ and a set $U$ of transient states such that $\pi \in R^*(s, R)$, for any $s \in U$, and both the average AoI $J^{\pi}$ and the average transmission cost $C^{\pi}$ on $R$ are finite (Assumption~2 of \cite{sennott_1993}).
\label{it:ii}
\item Given any two states $s, s' \in \mathcal{S}$, there exists a policy $\pi$ (a function of $s$ and $s'$) such that $\pi \in \mathcal{R}^*(s, \{s'\})$ (Assumption~3 of \cite{sennott_1993}).
\label{it:iii}
\item If a deterministic policy has at least one positive recurrent state, then it has a single positive recurrent class, and this class contains the state $(1,0)$ (Assumption~4 of \cite{sennott_1993}).
\label{it:iv}
\item There exists a policy $\pi$ such that $J^{\pi} < \infty$ and $C^{\pi} < C_{max}$ (Assumption~5 of \cite{sennott_1993}).
\label{it:v}
\end{enumerate}
\label{lemma_assump}
\end{prop}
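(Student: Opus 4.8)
The plan is to verify the five items separately, exploiting the explicit transition kernel \eqref{eq:transitions} and the fact that every admissible state satisfies $r<\delta$ (by \eqref{eq:S}). Item (i) is immediate: since the action $\idle$ is always available with $c((\delta,r),\idle)+d((\delta,r),\idle)=\delta$, we have $\min_{a}\{c(s,a)+d(s,a)\}=\delta$, so $G(b)=\{(\delta,r)\in\mathcal{S}:\delta\le b\}$; as $r<\delta\le b$ for each such state, $G(b)$ contains at most $\sum_{\delta=1}^{\lfloor b\rfloor}\delta$ states and is therefore finite. For item (ii) I would exhibit the stationary ``always send a new packet'' policy $\pi$ with $\pi(\delta,r)=\new$ in every state. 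Under $\pi$ each slot returns to $(1,0)$ with probability $1-g(0)>0$ and otherwise moves to $(\delta+1,1)$, so the recurrent class is $R=\{(1,0)\}\cup\{(\delta,1):\delta\ge2\}$, on which the age is geometric with mean $1/(1-g(0))$ and the transmission cost is $C^{\pi}=1$; every other state is transient and reaches $R$ in a geometric (hence finitely averaged) number of steps, giving $\pi\in R^*(s,R)$.

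Item (iii) is the crux, and I expect it to be the main obstacle. Fix $s$ and $s'=(\delta',r')$. If $r'=0$ the construction is trivial: from $s$ keep sending $\new$ until the first success (which occurs a.s. with geometric waiting time) to land in $(1,0)$, then idle $\delta'-1$ times to reach $(\delta',0)$. For $r'\ge1$ the target requires a precise run of $r'$ consecutive decoding failures, which a policy cannot force; so I would design a policy that attempts the run and restarts on any premature success. One attempt consists of reaching $(1,0)$, idling up to $(\delta'-r',0)$ (possible since $r'<\delta'$), then issuing $\new$ followed by $r'-1$ retransmissions; each failure has positive probability because $r'\le r_{max}$ implies $g(k)>0$ for all $k\le r'-1<r_{max}$, so an attempt reaches $s'$ with probability at least $q=\prod_{k=0}^{r'-1}g(k)>0$. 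Any success along the way resets the chain to a state with retransmission count $0$, from which a fresh attempt is launched. Since each attempt succeeds with probability bounded below by $q$ and has uniformly bounded expected duration, the number of attempts is dominated by a geometric random variable; a Wald-type bound then yields a.s. absorption at $s'$ with finite expected passage time and finite expected accumulated AoI and transmission cost, i.e. $\pi\in\mathcal{R}^*(s,\{s'\})$.

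For item (iv), let $\pi$ be deterministic with at least one positive recurrent state, and let $Z$ be a positive recurrent class. The class cannot consist only of states where $\pi$ chooses $\idle$, because under $\idle$ the age strictly increases forever ($\Pt(\delta+1,0\mid\delta,r,\idle)=1$), so no state would be revisited, contradicting recurrence. Hence $Z$ contains a state where $\pi$ transmits. If that state has $r=0$ the only admissible transmission is $\new$, which reaches $(1,0)$ with probability $1-g(0)>0$; if it has $r\ge1$ and $\pi$ retransmits, the chain reaches some $(r{+}1,0)$ with positive probability, and from there $\pi$ cannot idle indefinitely (again by the divergence argument applied inside the recurrent class), so it eventually issues a $\new$ from an $r=0$ state and reaches $(1,0)$. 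Because a state reachable from a recurrent state lies in the same recurrent class, this shows $(1,0)\in Z$. As every positive recurrent class must contain $(1,0)$ and $(1,0)$ belongs to a unique class, the positive recurrent class is unique and contains $(1,0)$.

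Finally, for item (v) I would use the threshold policy that idles while $\delta<\Delta$ and sends $\new$ otherwise. Its renewal cycle (between successive visits to $(1,0)$) has expected length $\Delta-1+1/(1-g(0))$ with an expected $1/(1-g(0))$ transmissions, giving $C^{\Delta}=\frac{1}{\Delta(1-g(0))+g(0)}\to0$ as $\Delta\to\infty$; choosing $\Delta$ large enough makes $C^{\Delta}<C_{max}$, while the geometric tail of the cycle keeps $J^{\Delta}<\infty$. Collecting the five items establishes the proposition, with the probabilistic reachability argument of (iii) being the only step requiring genuine care.
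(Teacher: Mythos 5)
Your proof is correct and follows essentially the same route as the paper's: item by item, with the same constructions --- the idle-action bound giving $G(b)\subseteq\{(\delta,r):r<\delta\le b\}$ for (i), the always-transmit-new policy with recurrent class $\{(1,0)\}\cup\{(\delta,1)\}$ for (ii), and for (iii) the same ``idle up to $(\delta'-r',0)$, send a new packet, then retransmit $r'-1$ times, restarting after any premature success'' scheme with per-attempt success probability $\prod_{k=0}^{r'-1} g(k)>0$. Two minor divergences are worth noting. For (v), the paper uses a periodic policy (transmit whenever $\delta-1$ is a multiple of $2\lceil 1/C_{max}\rceil$) rather than your threshold policy; both give $C^{\pi}<C_{max}$ and finite average AoI by the same renewal/geometric-tail argument, so the difference is immaterial. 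For (iv), your argument is in fact more careful than the paper's: the paper asserts that any successful transmission returns the chain to $(1,0)$, which overlooks that a successful \emph{retransmission} from $(\delta,r)$ lands at $(r+1,0)$; your case analysis --- a positive recurrent class must contain a transmitting state, a successful retransmission leads to an $r=0$ state, from which the policy cannot idle forever inside a recurrent class and so must eventually send a new update, reaching $(1,0)$ with probability $1-g(0)>0$ --- closes exactly this small gap while arriving at the same conclusion.
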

\begin{proof}
Note that \eqref{it:i}-\eqref{it:iv} are independent from the constraint \eqref{eq:constraint}, and the policies required in the proposition need not be deterministic unless specifically required.

First note that \eqref{it:i} holds trivially, since for any $b$, if state $(\delta,r) \in G(b)$ then $r < \delta \le b$ by \eqref{eq:S}.
 
To prove \eqref{it:ii}, consider the policy $\pi(\delta,r)=\new$ for all $(\delta,r) \in \mathcal{S}$. Since $0<g(0)<1$, $R=\{(1,0)\} \cup \{(\delta,1): \delta=1,2,\ldots,\}$ is a recurrent class since from any state $(\delta,r) \in R$, the next state is either $(1,0)$ or $(\delta+1,1)$, both belonging to $R$. Furthermore, the set of states $U=\mathcal{S}\setminus R$ is clearly transient: starting from any $s \in U$, the probability of not getting to state $(1,0)$ (and hence to $R$) in at most $k$ steps is $g(0)^k$. The latter also implies that $\pi \in R^*(s,R)$. Finally, $C^\pi=1$, and $J^\pi = 1/(1-g(0))$, proving \eqref{it:ii}.

To prove \eqref{it:iii}, let $s=(\delta,r)$ and $s'=(\delta',r')$. For any $s,s'$, we construct the required policy. 
First note that from $(1,0)$, we can govern the state with positive probability to any valid state $(\delta',r')$ by being idle in states
$(\delta'',0)$ for $\delta'' < \delta'-r'$, sending a new packet in state $(\delta'-r',0)$, and retransmitting in states $(\delta'-r'+k,k)$ for $k=1,\ldots,r'-1$. Sending a new packet in any other state $(\delta'',r'')$ will send the chain to state $(1,0)$ as quickly as possible, with the number of steps being exponentially distributed with parameter $g(0)$. It is trivial to see that the proposed policy belongs to $\mathcal{R}^*(s, \{s'\})$.

To see \eqref{it:iv}, notice that the only way the AoI does not increase in one step is if there is a successful transmission, after which the chain returns to state $(1,0)$. Thus, any (positive) recurrent class must contain the state $(1,0)$; and hence, there can only be a single positive recurrent class.

Finally, it is easy to see that the policy $\pi$ defined as $\pi(\delta,r)=\new$ if $\delta-1$ is a multiple of  $2 \lceil 1/C_{max}\rceil$, and $\pi(\delta,r)=\idle$ otherwise, satisfies the requirements of \eqref{it:v}: $C^\pi = 1/(2 \lceil 1/C_{max}\rceil) \le C_{max}/2<C_{max}$, and $J^\pi < \infty$ since $P^\pi(\delta> 2k \lceil 1/C_{max}\rceil) 
= g(0)^k$ for any $k \ge 0$. This completes the proof of the proposition.
\end{proof}

\subsection{Proof of Lemma~\ref{lem:ARQ}} 
\label{AppendA}

We are going to show that the decision to transmit ($a=\new$) is monotone with respect to the age $\delta$, that is if $a^*(\delta^1)=\new$, then $a^*(\delta^2)=\new$ for all $\delta^2 \ge \delta^1$. By \eqref{eq:opt_eta}, this holds if $Q_{\eta}(\delta,a)$ has a \textit{sub-modular} structure \cite{Topkis1978}: that is, when the difference between the $Q$ functions is monotone with respect to the state-action pair ($\delta,a$). We have
\begin{align}
Q_{\eta}(\delta^1,\new)-Q_{\eta}(\delta^1,\idle)\geq Q_{\eta}(\delta^2,\new)-Q_{\eta}(\delta^2,\idle), \label{eq:submodular}
\end{align} 
for any $\delta^2\geq \delta^1$. From (\ref{eq:update}) and (\ref{eq:update1}), for any $\delta>0$, we have
\begin{align}
Q_{\eta}(\delta,\new)-Q_{\eta}(\delta,\idle)&=\eta+(1-p) h_{\eta}(1)-(1-p) h_{\eta}(\delta+1).
\end{align}
We can see that (\ref{eq:submodular}) holds if and only if $h_{\eta}(\delta)$ is a non-decreasing function of the age. 

We compare the costs incurred by the systems starting in states $\delta^1$ and $\delta^2$ via coupling the stochastic processes governing the behavior of the system; that is, we assume that the realization of the channel behavior is the same for both systems over the time horizon (this is valid since channel states/errors are independent of the ages and the actions). Assume a sequence of actions $\{a^2_t\}_{t=1}^{\infty}$ corresponds to the optimal policy starting from age $\delta^2$ for a particular realization of channel errors, and let $\{\delta^i_t\}$ denote the sequence of states obtained after following actions $\{a^2_t\}$ starting from state $\delta_1=\delta^i$, $i=1,2$. Then, if $\delta^1 \le \delta^2$, clearly $\delta^1_t \le \delta_2^t$ for all $t$.
Furthermore, by the Bellman optimality equation \eqref{eq:Bellman},
\begin{align*}
h_\eta(\delta^1) & \le \delta^1_1+\eta \cdot \mathbbm{1}[a^2_1 \neq \idle]-L^*_\eta + \Exp{h_\eta(\delta^1_2) } \\
& \le \delta^1_1+ \eta  \cdot \mathbbm{1}[a^2_1 \neq \idle]-L^*_\eta  + \Exp{ \delta^1_2 + \eta \cdot \mathbbm{1}[a^2_2 \neq \idle]-L^*_\eta + \Exp{h_\eta(\delta^1_3) }} \\
& \quad \vdots \\
& \le \Exp{\sum_{t=1}^\infty (\delta^1_t+ \eta \cdot \mathbbm{1}[a^2_t \neq \idle]-L^*_\eta) \bigg| \delta^1_1=\delta^1} \\
& \le \Exp{\sum_{t=1}^\infty (\delta^2_t+ \eta \cdot \mathbbm{1}[a^2_t \neq \idle]-L^*_\eta) \bigg| \delta^1_1=\delta^2} \\
& = h_\eta(\delta^2)~.
\end{align*}
This completes the proof of the lemma.
\qed




\subsection{Proof of Lemma~\ref{thm1}}
\label{AppendB}
First we compute the steady state probabilities $p_\delta$ of the age $\delta$ for a given integer threshold $\Delta$, for all $\delta= 1,2,\ldots, N$. We have
\begin{align*}
p_{\delta}&= \begin{cases} p_1 & \text{ if }  1 \le \delta \le \Delta \\
p_{\delta-1}p=p_1p^{\delta-\Delta} & \text{ if } \delta \ge \Delta+1~.
\end{cases}
\end{align*} 
Since $\sum_{\delta=1}^{\infty}{p_{\delta}}=1$, we can compute the $p_{\delta}$ in closed form when $N$ goes to infinity:
\begin{align}
\label{eq:s_prob}
p_{\delta}=\begin{cases}
\frac{1}{\Delta+\frac{p}{1-p}}~&\mbox{if } \delta\leq \Delta;\\
\frac{p^{\delta-\Delta}}{\Delta+\frac{p}{1-p}}~ &\mbox{otherwise}.
\end{cases}
\end{align}

Then, the closed form of the expected Lagrangian cost function can be computed as:
\begin{align}
\label{eq:JetaARQ}
L^{\Delta}_{\eta}&=\sum_{\delta=1}^{\infty}p_{\delta}(\delta + \eta \mathbbm{1}[\delta \ge \Delta])= p_1 \left(\sum_{\delta=1}^{\Delta-1}{\delta}+ \sum_{\delta=\Delta}^{\infty}{p^{\delta-\Delta}(\delta+\eta)}\right) \nonumber \\
&=p_1 \left(\frac{(\Delta-1)\Delta}{2}+ \frac{\eta+\Delta}{1-p}+\frac{p}{(1-p)^2}\right). 
\end{align}
Substituting $p_1$ from \eqref{eq:s_prob} and minimizing over $\Delta$ (by setting the derivative $\partial L^{\Delta}_{\eta}/\partial \Delta$ to zero) yields that the optimal non-integer value of $\Delta$ is given by
\begin{align*}
\hat{\Delta}_\eta&= \frac{\sqrt{2\eta(1-p)+p}-p}{1-p}~.
\end{align*}  
Using that $L^{\Delta}_\eta$ is a convex function of $\Delta$ by \eqref{eq:JetaARQ}, the optimal integer threshold $\Delta^*_\eta$ is either
\[
\left\lfloor\frac{\sqrt{2\eta(1-p)+p}-p}{1-p} \right\rfloor \text{ or } \left\lceil\frac{\sqrt{2\eta(1-p)+p}-p}{1-p} \right\rceil~.
\]
%
%
Computing just the cost term from \eqref{eq:JetaARQ}, we obtain the formula for $C^{\Delta}$ for any integer threshold $\Delta$.
\qed

\subsection{Proof of Theorem~\ref{thm:ARQ}}
\label{AppendC}
Let $\pi_{\eta^*}$ denote the deterministic solution of the Bellman equation \eqref{eq:Bellman}. If $C^{\pi_{\eta^*}}=C_{max}$ then $\pi_{\eta^*}$ is the optimal solution to Problem~\ref{problem} (by Proposition~3.2 and Lemma~3.10 of \cite{sennott_1993}), and since it is a threshold policy by Lemma~\ref{lem:ARQ} with threshold $\Delta_{C_{max}}=\Delta_1=\Delta_2$ (as can be obtained by inverting  equation~\eqref{eq:const}), the theorem holds.

For $C^{\pi_{\eta^*}} \neq C_{max}$, we first show that the optimal policy is a mixture of two threshold policies that differ in at most a single state, based on the construction used to prove Theorem~2.5 of \cite{sennott_1993}.
Assume without loss of generality that $C^{\pi_{\eta^*}} > C_{max}$, and consider a sequence of Lagrange multipliers $\eta_n \downarrow \eta^*$ such that the  corresponding deterministic solutions $\pi^*_{\eta_n}$ of \eqref{eq:Bellman} (which are also $\eta_n$-optimal by Proposition~3.2 of \cite{sennott_1993}) converge to a policy $\pi^*$.\footnote{If $C^{\pi_{\eta^*}}>C_{max}$, $\eta_n$ should be increasing to $\eta^*$, and the rest of the proof follows the same lines as for the case of $C^{\pi_{\eta^*}} < C_{max}$.}
By Lemma~\ref{thm1}, these are all threshold policies, and so $\pi_{\eta^*}$ and $\pi^*$ are both threshold policies. By Lemma~3.7 (iii) of \cite{sennott_1993}, $\pi^*$ is $\eta^*$-optimal, and $C^{\pi^*} \le C_{max}$ by Lemma~3.4 of \cite{sennott_1993}. If $C^{\pi^*}=C_{max}$ then the proof can be completed as in the case of $C^{\pi_{\eta^*}}=C_{max}$. Thus, we are left with the case of $C^{\pi^*} < C_{max}$. 
Denoting by $(\mu,\pi_{\eta^*},\pi^*)$ the randomized policy that selects $\pi_{\eta^*}$ with probability $\mu=\frac{C_{max}-C^{\pi^*}}{C^{\pi_{\eta^*}}-C^{\pi^*}}$ and $\pi^*$ with probability $1-\mu$ before the system starts and then uses the selected policy forever, it follows that $(\mu,\pi_{\eta^*},\pi^*)$ has average transmission cost $C_{max}$, while it is also $\eta^*$-optimal by Lemma~3.9 of \cite{sennott_1993}. Therefore,
by Lemma~3.10 of \cite{sennott_1993}, $(\mu,\pi_{\eta^*},\pi^*)$ is an optimal solution to Problem~\ref{problem}. 

Next we show that the thresholds of the two policies are $\Delta_1$ and $\Delta_2$. 
From  the proof of Lemma~\ref{thm1}, one can easily deduce that the average AoI of a threshold policy for any integer threshold $\Delta$ is given by
\[
J^\Delta= \frac{(\Delta(1-p)+p)^2+p}{2(1-p)(\Delta(1-p)+p)}+\frac{1}{2}.
\]
Expressing $J^\Delta$ as a function of $C^\Delta$ (given in \eqref{eq:const}), and extending the definition of $C^\Delta$ and $J^\Delta$ to positive real values of $\Delta$,  one can see that 
\[
J^\Delta = \frac{1}{2(1-p)C^\Delta}+\frac{1}{2}+\frac{p C^\Delta}{2(1-p)}
\]
is a convex function of $C^\Delta$. Denoting the threshold of $\pi_{\eta^*}$ and $\pi^*$ by $\Delta_{\eta^*}$ and $\Delta^*$, respectively, we obviously have that the expected average AoI of $(\mu,\pi_{\eta^*},\pi^*)$ is
$\mu J^{\Delta_{\eta^*}} + (1-\mu) J^{\pi^*}$, while the expected average transmission cost is $C_{max}$. By the convexity of $J^\Delta$, and since $C^{\Delta^*}= C^{\pi^*}  < C_{max} < C^{\pi_{\eta^*}} = C^{ \Delta_{\eta^*}}$ it follows that the integer threshold values minimizing the AoI
must be the closest integers (from above and below) to $\Delta_{C_{max}}$, the minimizer of $J^{\Delta}$ over the reals.
That is, $\Delta_{\eta^*} =  \Delta_1 = \lfloor \Delta_{C_{max} } \rfloor$ and $\Delta^*=\Delta_2 = \lceil \Delta_{C_{max} } \rceil$ (recall that the transmission cost $C^\Delta$ is a decreasing function of the threshold $\Delta$). Note that this also implies that $\mu=\mu^*$ (recall that $\mu^*$ is specified in the statement of the theorem).

To complete the proof, define $(\pi_{\eta^*},\pi^*)-\mu$ to be the policy that randomly selects between  $\pi_{\eta^*}$ and $\pi^*$ every time state $(1,0)$ is reached (independently, and with probability $\mu$ and $1-\mu$, resp.) and follows the chosen policy until state $(1,0)$ is reached again. Since $(1,0)$ is a positive recurrent state of both $\pi^*$ and $\pi_{\eta^*}$, the policy $(\pi_{\eta^*},\pi^*)-\mu$ has the same expected AoI and transmission cost as $(\mu,\pi_{\eta^*},\pi^*)$, which randomizes once at the beginning. Therefore, $(\pi_{\eta^*},\pi^*)-\mu$ is optimal. Moreover, since $\pi^*$ and $\pi_{\eta^*}$ only differ in state $\delta=\Delta_2$, the randomization can be performed only in that state. Thus, since $\mu=\mu^*$, the policy $(\pi_{\eta^*},\pi^*)-\mu$ is identical to $\pi^*_{C_{max},\mu^*}$, defined in the theorem, proving that $\pi^*_{C_{max},\mu^*}$ is optimal.
\qed

\subsection{Unconstrained case ($C_{max}=1$)}
\label{sec:unconstrained}

Here we analyze the problem without a transmission constraint, that is, when $C_{max}=1$. We show that the conditions of part (ii) of the Theorem in \cite{sennott_1989} hold, implying that there exists a deterministic optimal policy satisfying the Bellman equation \eqref{eq:Bellman} with $\eta=0$ and $a$ restricted to $\{\new,\retx\}$, namely
\begin{align}
\label{eq:Bellman2}
h(\delta,r)+L^*&=\min_{a\in\{\new,\retx\}}\big(\delta+\Exp{h(\delta',r')}\big)
\end{align}
for some function $h(\delta,r)$ and constant $L^*$.

For any $\alpha \in (0,1)$, policy $\pi$ (here a policy is an arbitrary, possibly randomized decision strategy that may depend on the whole history) and initial state $s_0$, let
\[
J^\pi_\alpha(s_0) \triangleq  \Exp{\sum_{t=0}^\infty \alpha^t \delta^{\pi}_t \Big|s_0}~,
\]
and $J_\alpha(s_0)= \inf_{\pi} J^\pi_\alpha(s_0)$.

Consider policy $\pi_{\new}$, which transmits a new update in every step. One can verify (e.g., by induction) that the stationary distribution of the Markov chain induced by this policy is a geometric distribution over states $(\delta,0)$ with parameter $1-p$, where $p=g(0)$: that is, the probability of being in state $(\delta,0)$ is $(1-p)p^{\delta-1}$.

Next, we verify Assumption~1 of \cite{sennott_1989}, which requires that
for any $\alpha$ and state $s=(\delta_0,r)$, $J_\alpha(s)$ is finite.
Note that given the first state is $(\delta_0,r)$, we have $\delta_t \le \delta_0 + t$. Therefore,
\[
J_\alpha((\delta_0,r)) \le J_\alpha^{\pi_{\new}}((\delta_0,r)) \le \sum_{t=0}^\infty \alpha^t (\delta_0+t) < \infty,
\]
which is what we wanted to prove.

Let $h_\alpha(s)=J_\alpha(s) - J_\alpha(s_0)$, where $s_0=(1,0)$. In what follows, we give upper and lower bounds on $h_\alpha$.
Consider an arbitrary policy $\pi$ starting from state $s=(\delta, r)$. Since in every time step a transmission is successful with probability at least $1-p$ (since the success probability cannot decrease with retransmissions), and if two successive transmissions are successful, the second must be a new update, in two steps the process returns to state $s_0$ with probability at least $(1-p)^2$. Thus, if the MDP is started from $s$ and $s_0$, with probability at least $q=(1-p)^4$, they synchronize after two steps, after which the terms in the summations defining $J^\pi_\alpha(s)$ and $J^\pi_\alpha(s_0)$ become identical: denoting the AoI at time $t$ by $\delta_t(s)$ and $\delta_t(s_0)$ for the processes started in state $s$ and $s_0$, respectively, and by $T$ the first time step they simultaneously reach the same state,
we have $\delta_t(s) - \delta_t(s_0) \le \delta+t-1$ for $t < T$ (before the synchronization happens) and $\delta_t(s)=\delta_t(s_0)$ for $t \ge T$ (after synchronization). By our argument above, for any $k$, $\Pr(T \ge 2k) \le (1-q)^k$, and so
\begin{align*}
\lefteqn{\Exp{J^\pi_\alpha(s) - J^\pi_\alpha(s_0)} = \Exp{\sum_{t=0}^\infty \alpha^t (\delta_t(s) - \delta_t(s_0))} }\\
& = \sum_{t=0}^\infty \mathbb{E}\Big[\alpha^{2t} (\delta_{2t}(s) - \delta_{2t}(s_0)) +  \alpha^{2t+1} (\delta_{2t+1}(s) - \delta_{2t+1}(s_0)) \Big| 2t+1 < T \Big] \Pr(2t+1<T) \\
& \le \sum_{t=0}^\infty \left(\alpha^{2t}(\delta+2t-1) + \alpha^{2t+1} (\delta+2t)\right) (1-q)^{t+1} \\
& < 2 \delta \sum_{t=0}^\infty (1-q)^{t+1} + 4 \sum_{t=0}^\infty t(1-q)^{t+1} \\
& = \frac{2(1-q)}{q} \delta+ \frac{4(1-q)^2}{q^2}
\end{align*}
Therefore, we have 
\[
h_\alpha(s) \le \sup_\pi \Exp{J^\pi_\alpha(s) - J^\pi_\alpha(s_0)} < \frac{2(1-q)}{q} \delta+ \frac{4(1-q)^2}{q^2} \triangleq M_\delta.
\]
Similarly, since $\delta_t(s_0)-\delta_t(s) \le t$, we can prove that $\Exp{J_\alpha(s_0) - J_\alpha(s)}  < \frac{4(1-q)}{q^2}$, which implies
$-\frac{4(1-q)}{q^2} \le h_\alpha(s)$. 
The latter directly proves Assumption~2 of \cite{sennott_1989}, which requires $h_\alpha(s)$ to be uniformly bounded from below by a nonpositive constant for all $\alpha \in (0,1)$ and state $s$. 

Finally, for any starting state $s=(\delta,r)$ let $s'=(\delta',r')$ denote the next state following action $a$. Assumptions~3 and~3$^*$ of \cite{sennott_1989} are satisfied if
$\Exp{M_{\delta'}|s, a}<\infty$ holds for all $s$, $s'$ and $a$. Since for any $s$ and $a$ there can be only two states $s'$ with non-zero probability and all $M_{\delta'}$ are finite, this is trivially satisfied.

Therefore, Assumptions~1--3$^*$ of Sennott \cite{sennott_1989} are satisfied, and hence part (ii) of her Theorem implies that there exists a deterministic, optimal policy satisfying the Bellman equation \eqref{eq:Bellman2} (equivalently, equation~\eqref{eq:Bellman} with $\eta=0$ and $a$ restricted to $\{\new,\retx\}$).

\section*{Acknowledgement}
The authors would like to thank the anonymous reviewers for their careful reading of the paper and their insightful comments.

\bibliography{ageof8}

\end{document}